\documentclass[sigconf,nonacm]{acmart}
\AtBeginDocument{%
  \providecommand\BibTeX{{%
    Bib\TeX}}}

\setcopyright{acmlicensed}
\setcctype{by}
\copyrightyear{2027}
\acmYear{2027}
\acmDOI{XXXXXXX.XXXXXXX}
\acmConference[WWW '27]{the ACM Web Conference 2027}{May 10--14, 2027}{Dublin, Ireland}
\acmISBN{978-1-4503-XXXX-X/2018/06}

\acmSubmissionID{3281}

\usepackage{tcolorbox}
\usepackage{xspace}

\usepackage{amsthm}
\usepackage{amsmath} 

\theoremstyle{definition}
\theoremstyle{plain}
\newtheorem{definition}{Definition}
\newtheorem{problem}{Problem}
\newtheorem{lemma}{Lemma}

\usepackage{algorithm}
\usepackage{algpseudocode}

\usepackage{booktabs}
\usepackage{multirow}

\usepackage{graphicx}

\usepackage{enumitem}

\usepackage{url}

\usepackage{hyperref}

\usepackage{xcolor}

\usepackage{enumitem}

\newcommand{\spara}[1]{\smallbreak\noindent{\bf{#1}}}

\newcommand{\squishlist}{\begin{list}{$\bullet$}
  { \setlength{\itemsep}{0pt}
     \setlength{\parsep}{3pt}
     \setlength{\topsep}{3pt}
     \setlength{\partopsep}{0pt}
     \setlength{\leftmargin}{1.5em}
     \setlength{\labelwidth}{1em}
     \setlength{\labelsep}{0.5em} } }
\newcommand{\squishend}{
  \end{list}  }

\usepackage{amsmath, amsthm, mathtools}

\newtheorem{theorem}{Theorem}
\newtheorem{assumption}{Assumption}

\usepackage{subcaption}

\usepackage{stfloats}

\usetikzlibrary{arrows.meta}
\hypersetup{colorlinks=true,linkcolor=blue,citecolor=blue,urlcolor=blue}

\usepackage{multirow}
\usepackage{booktabs}

\begin{document}



\title[Not Everyone Has a Voice, or an Ear, on Social Media: The Poster-Observer Model of Opinion Dynamics]{Not Everyone Has a Voice, or an Ear, on Social Media:\\ The Poster-Observer Model of Opinion Dynamics}

\author{Nikos Theologis}
\affiliation{%
  \institution{University of Ioannina}
  \city{Ioannina}    
  \country{Greece}
}
\affiliation{%
  \institution{Archimedes/Athena Research Center}
  \city{Athens}    
  \country{Greece}
}
\email{n.theologis@athenarc.gr}

\author{Aristides Gionis}
\affiliation{%
  \institution{KTH Royal Institute of Technology}
  \city{Stockholm}  
  \country{Sweden}
}
\email{argioni@kth.se}

\author{Evaggelia Pitoura}
\affiliation{%
  \institution{University of Ioannina}
  \city{Ioannina}    
  \country{Greece}
}
\affiliation{%
  \institution{Archimedes/Athena Research Center}
  \city{Athens}    
  \country{Greece}
}
\email{pitoura@uoi.gr}

\author{Evimaria Terzi}
\affiliation{%
  \institution{Boston University}
  \city{Boston}  
  \country{USA}
}
\email{evimaria@bu.edu}

\author{Panayiotis Tsaparas}
\affiliation{%
  \institution{University of Ioannina}
  \city{Ioannina}    
  \country{Greece}
}
\affiliation{%
  \institution{Archimedes/Athena Research Center}
  \city{Athens}    
  \country{Greece}
}
\email{tsap@uoi.gr}

\begin{abstract}
Classic modeling of opinion dynamics via the Friedkin-Johnsen (FJ) model relies on the unrealistic assumption of synchronous communication among network users. 
Social media, however, exhibit extreme participation asymmetries: a vast majority of users silently observe, while a tiny minority creates and broadcasts most of the content. 
To capture this reality, we introduce the Poster-Observer FJ (PO-FJ) model, which decouples user communication into distinct posting and observing behaviors.
This formulation fundamentally alters network dynamics by preventing convergence to a static equilibrium, driving opinions into perpetual oscillations. 
We establish closed-form expressions for these new dynamics and generalize classic metrics of public opinion, polarization, and disagreement under the PO-FJ model. 
By leveraging the novel Poster-Observer dynamics, we introduce a realistic and scalable intervention algorithm for conflict mitigation. 
Through experiments on synthetic and real-world networks, we demonstrate the efficacy and efficiency of our approach in reducing social-network conflict.
\end{abstract}


\begin{CCSXML}
<ccs2012>
   <concept>
       <concept_id>10002951.10003260.10003282.10003292</concept_id>
       <concept_desc>Information systems~Social networks</concept_desc>
       <concept_significance>500</concept_significance>
       </concept>
   <concept>
       <concept_id>10003752.10003809.10003853.10003858</concept_id>
       <concept_desc>Theory of computation~Network optimization</concept_desc>
       <concept_significance>500</concept_significance>
       </concept>
 </ccs2012>
\end{CCSXML}

\ccsdesc[500]{Information systems~Social networks}
\ccsdesc[500]{Theory of computation~Network optimization}

\keywords{opinion formation, polarization, optimization}



\maketitle

\section{Introduction}\label{sec:intro}

Social media platforms have become the primary arena where modern public discourse unfolds, news is consumed, and societal opinions are shaped \cite{lazer2009computational, flaxman2016filter}. 
Every day, billions of users interact throughout
the Web, exchanging ideas, expressing viewpoints, and influencing the beliefs of their peers. 
Because these interactions directly affect real-world phenomena, ranging from political polarization and echo chambers to consensus formation \cite{adamic2005political, conover2011political, del2016spreading}, mathematical models of opinion dynamics \cite{social-dynamics-survey, das2014modeling} have become central to the analysis of online social networks.

Among these frameworks, the Friedkin-Johnsen (FJ) model \cite{Friedkin_Johnsen} is one of the most established and widely studied models of opinion formation. 
Under FJ, individuals update their expressed opinions as a convex combination of their own inner opinion and the expressed opinions of their connections, weighted by their stubbornness.
This captures the tension between the inner beliefs of users and their tendency to conform to the opinions of their social circle.
The FJ model has been used extensively to study social-network polarization and disagreement \cite{Howbad,matakos2017measuring, Musco, Tsourakakis, NEURIPS2021}. 
Crucially, FJ relies on the unrealistic assumption that communication takes place in a \emph{synchronous} manner, where all users simultaneously broadcast their opinions and read the opinions of their neighbors in the social graph.

Web platforms such as social networks, review sites, and online forums, however,
exhibit extreme participation asymmetries \cite{Nonnecke2000, nielsen2006participation, vanMierlo2014, CarronArthur2014}. 
As captured by the ``1\% rule'' and the ``90-9-1 rule'' of online communities \cite{nielsen2006participation, vanMierlo2014}, the vast majority of users act as silent \emph{lurkers} who browse their feeds rarely posting, a small fraction contribute intermittently, and a tiny minority of hyper-active \emph{broadcasters} generate most of the content. 
To capture this reality, we expand the classic FJ framework by introducing the \emph{Poster-Observer Friedkin-Johnsen} (PO-FJ) model, which decouples user communication into two distinct behaviors: \emph{posting} an opinion and \emph{observing} the opinions of others. 
Under PO-FJ, the network changes dynamically at each iteration based on who posts and who observes. 
When a user does not observe their feed, their outward edges are temporarily removed; conversely, when a user does not post, all inward edges from followers who would have read their content become inactive.
This mechanism captures diverse user behavior, spanning from passive lurkers to influential broadcasters.

We show that by introducing stochastic posting and observing user behaviors, the opinion dynamics of the network fundamentally change. 
Unlike the classic FJ model, where opinions deterministically converge to a static equilibrium, PO-FJ dynamics drive user opinions into perpetual oscillations, which converge to a \emph{stationary probability distribution}. 
As a result, the long-run state of the social network is no longer described by a single static vector, but by the \emph{expectations}, \emph{variances}, and pairwise \emph{correlations} of the opinions, for which we derive closed-form expressions. 
Furthermore, because the underlying dynamics have changed, classic measures of public opinion, polarization, and disagreement \cite{Howbad, matakos2017measuring, Musco} are computed in expectation and now depend on the opinion variance.

Following the literature on mitigating polarization and disagreement in the FJ model \cite{matakos2017measuring, Musco, matakos2020tell, NEURIPS2021}, we investigate how the new levers of the PO-FJ model can be used to reduce network conflict.
Whereas traditional interventions rely on modifying graph edges or inner beliefs, we study conflict mediation through the posting and observing behaviors.
We formalize this task as the \emph{Conflict Limitation via Exposure and Amplification Regulation} (CLEAR) problem. 
To solve it efficiently, we introduce \text{CLEAR-PGD}, a Projected Gradient Descent algorithm that allows for scaling and outperforms heuristic baselines. 
Experiments on synthetic and large-scale real-world Web networks demonstrate the efficacy and efficiency of our approach in reducing social-network conflict.

In summary, we make the following contributions:
\squishlist
    \item We introduce the Poster-Observer Friedkin-Johnsen (PO-FJ) model, which extends classic FJ by decoupling user communication into distinct posting and observing behaviors.
    \item We establish the new opinion dynamics and generalize the metrics of polarization and disagreement under PO-FJ.
    \item We formulate the CLEAR problem of minimizing network conflict by regulating user amplification and exposure.
    \item We propose \text{CLEAR-PGD}, a scalable algorithm that effectively reduces network conflict, evaluated across real-world networks.
\squishend

The rest of the paper is structured as follows. Section~\ref{sec:related_work} reviews related work. 
Section~\ref{sec:model} introduces the PO-FJ model. 
Section~\ref{sec:metrics} defines polarization and disagreement alongside the CLEAR problem and our \text{CLEAR-PGD} algorithm. 
Section~\ref{sec:experiments} presents our experimental evaluation, and Section~\ref{sec:conclusion} concludes the paper.

\section{Related Work}
\label{sec:related_work}

To the best of our knowledge, we are the first to introduce the Poster-Observer Friedkin-Johnsen (PO-FJ) model and utilize it to mitigate polarization and disagreement. 
However, our research is related to existing work on \emph{Time-Varying Network Models} and \emph{Polarization and Disagreement Mitigation} in opinion-formation dynamics.

\spara{Time-Varying Opinion Dynamics.}
Time-varying interactions have been explored across several opinion-formation frameworks. 
The Time-Varying FJ (TV-FJ) model \cite{TVFJ} generalizes classic FJ \cite{Friedkin_Johnsen} and introduces iteration-dependent interactions and stubbornness values. 
Convergence of TV-FJ is not guaranteed and conditions for stability have been studied under homophily \cite{TVFJ_Homophily} and graph structure \cite{TVFJ_Graph}.
In contrast, PO-FJ introduces an explicit interaction mechanism that allows for closed-form results. 
Work on complex networks has also examined activity-driven graphs \cite{complexFirst} and Bounded-Confidence Models (BCMs) \cite{Bounded1,Bounded2}. 
Specifically, prior literature incorporates asynchronous updates \cite{alizadeh2015activation}, random-time interactions \cite{chu2024bounded}, adaptive interaction probabilities \cite{thompsky2026bounded}, and heterogeneous node-activity levels \cite{li2023bounded} into BCMs.
Unlike our PO-FJ results, these works rely on empirical evaluation due to the non-linearity of the BCM dynamics.
Finally, asynchronous information exchange has also been modeled via the Gossip model \cite{Gossip1,Gossip2}, which introduces randomized pairwise edge interactions to FJ. 
Our PO-FJ model instead captures one-to-many broadcast and exposure behaviors.

\spara{Polarization and Disagreement Mitigation.}
A substantial body of research addresses algorithmic interventions in the FJ model to optimize social objectives, as surveyed by \citet{miyauchi2026survey}.
One primary direction focuses on structural interventions, such as rewiring network edges, to minimize polarization or disagreement \cite{Howbad, Musco, NEURIPS2021, kuhne2025optimizing}.
Another direction explores node-level interventions, modifying inner opinions \cite{gionis,matakos2017measuring, matakos2020tell} and stubbornness \cite{Tsourakakis,  Xu2022, theologis2026fairness} to reduce polarization and achieve network fairness.
How polarization can emerge in FJ with minor edge modifications has also been shown by \citet{Filter_Bubbles_FJ}.
Crucially, all of these algorithmic interventions rely on classic FJ parameters under synchronous communication. 
Under PO-FJ, we introduce new intervention mechanisms via adjustments in node exposure and~amplification.

\section{The Poster-Observer FJ Model}
\label{sec:model}

In this section, we introduce the Poster-Observer FJ model. 
We denote the identity matrix as $I$, the column vector full of ones as $\mathbf{1}$, and as $\circ$ the element-wise (Hadamard) product.

\subsection{The Friedkin-Johnsen Model}

Before presenting the PO-FJ model, we present the classic Friedkin-Johnsen (FJ) model \cite{Friedkin_Johnsen} upon which our framework builds.
Assume a directed graph $G$ consisting of $n$ nodes and $m$ edges. 
Each node $i$ has an \emph{inner opinion} $s_i \in [0,1]$ and an \emph{expressed opinion}~$z_i$. 
The inner opinions are fixed, while the expressed opinions  evolve through the iterative opinion-formation process: 
\begin{equation}
\label{eq:FJ_update}
z^{(t+1)}_i = a_i s_i + (1-a_i)\sum^n_{j=1} W_{ij} z_j^{(t)}. 
\end{equation}

In the above equation, $W_{ij}$ quantifies the importance node~$i$ places on the expressed opinion of node~$j$. 
The matrix $W \in [0,1]^{n \times n}$ is \emph{row-stochastic}, with $W_{ij}>0$ if there exists an edge from node~$i$ to node~$j$.
We call $W$ the \emph{base interaction matrix}.
Finally, $a_i\in [0,1]$ corresponds to the \emph{stubbornness} of node $i$ and quantifies its tendency to adhere to its own inner opinion.

Let $\mathbf{a}$, $\mathbf{s}$, and $\mathbf{z}$ denote the $n$-dimensional vectors of the stubbornness values, the inner, and the expressed opinions of the nodes. We can write the update Equation \eqref{eq:FJ_update} for the FJ model as
\begin{equation}
\label{eq:FJ_update_vec}
\mathbf{z}^{(t+1)} = A\mathbf{s} + (I-A)W\mathbf{z}^{(t)}, 
\end{equation}
where $A = \text{diag}(\mathbf{a})$ is the diagonal matrix of the stubbornness vector. 
It has been shown 
that the FJ opinion-formation process~converges.

\begin{theorem}[FJ Convergence~\cite{Friedkin_Johnsen}]\label{thm:fj_convergence}
If matrix $W$ is \emph{irreducible} (the underlying graph $G$ is strongly connected and aperiodic), and at least one node has stubbornness $a_i > 0$, then the sequence of opinions $\mathbf{z}^{(t)}$ under the FJ model converges to a static equilibrium
\begin{equation}
\label{eq:FJ_equilibrium}
\mathbf{z}^* = \left(I-(I-A)W\right)^{-1}A\mathbf{s} = V\mathbf{s},
\end{equation}
where $V = \left(I-(I-A)W\right)^{-1}A$ is the \emph{influence matrix}.
\end{theorem}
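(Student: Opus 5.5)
The update in Equation~\eqref{eq:FJ_update_vec} is an affine iteration $\mathbf{z}^{(t+1)} = M\mathbf{z}^{(t)} + A\mathbf{s}$ with $M=(I-A)W$. The plan is to show that the spectral radius satisfies $\rho(M)<1$. Once this holds, $I-M$ is invertible and the Neumann series $\sum_{k\ge 0} M^k$ converges to $(I-M)^{-1}$. Unrolling the recursion gives
\begin{equation*}
\mathbf{z}^{(t)} = M^t\mathbf{z}^{(0)} + \sum_{k=0}^{t-1} M^k A\mathbf{s}.
\end{equation*}
Since $M^t\to 0$, the first term vanishes, and the sum converges to $(I-M)^{-1}A\mathbf{s}=V\mathbf{s}$. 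This gives Equation~\eqref{eq:FJ_equilibrium}. The limit is also a fixed point of the iteration, which serves as a consistency check.

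\textbf{Bounding the spectral radius.} $M$ is entrywise nonnegative, and its row sums are $(1-a_i)\sum_j W_{ij} = 1-a_i \le 1$, so $\|M\|_\infty\le 1$ and $\rho(M)\le 1$. The difficulty is that some rows may have sum exactly $1$ (nodes with $a_i=0$), so a single application of the norm bound is not strict. This is the main obstacle, and irreducibility is what resolves it.

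\textbf{Propagating the slack.} Let $k$ be a node with $a_k>0$. Its row of $M$ has sum strictly below $1$. For any node $i$, strong connectivity gives a path $i=i_0\to i_1\to\dots\to i_\ell=k$ with $\ell\le n-1$ and positive weights $W_{i_r i_{r+1}}$. I will show by induction on path length that $(M^{r}\mathbf{1})_i<1$ whenever node $i$ reaches a stubborn node within $r$ steps. Set $\mathbf{u}=M^{r-1}\mathbf{1}\le\mathbf{1}$. If $a_i>0$, then $(M\mathbf{u})_i\le 1-a_i<1$. Otherwise, $(M\mathbf{u})_i=\sum_j W_{ij}u_j$ is a convex combination that places positive weight on the next node of the path, where $u_j<1$ by induction, so again $(M\mathbf{u})_i<1$. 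Because $M^{r}\mathbf{1}$ is entrywise nonincreasing in $r$ (since $M\mathbf{1}\le\mathbf{1}$ and $M\ge 0$), taking $r=n$ yields $\|M^n\|_\infty=\max_i (M^n\mathbf{1})_i<1$.

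\textbf{Conclusion.} It follows that $\rho(M)^n=\rho(M^n)\le\|M^n\|_\infty<1$, so $\rho(M)<1$, and the first step applies. Aperiodicity is not needed in this argument; only strong connectivity and a single stubborn node are used. Alternatively, one could cite Perron--Frobenius for irreducible substochastic matrices, but the direct path argument above is self-contained. The strict decrease of row sums along paths toward the stubborn node is the only delicate step.
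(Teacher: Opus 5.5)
The paper gives no proof of this theorem; it is quoted from Friedkin and Johnsen, so there is no in-paper argument to match, and your proposal is a correct, self-contained proof.

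The nearest argument in the paper is Part~2 of Appendix~\ref{app:no_static_consensus}. There the bound $\|(I-A)\hat{W}^{(t)}\|_\infty \le \max_i(1-a_i)<1$ is obtained in a single step, which works only because Theorem~\ref{thm:no_static_consensus} assumes every $a_i\in(0,1)$. Under the weaker hypothesis here (one stubborn node), the one-step norm can equal $1$. Your $n$-step slack propagation, giving $\|M^n\|_\infty<1$, is exactly the ingredient needed in that case.

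Compared with the Perron--Frobenius route you mention (an irreducible nonnegative matrix with row sums $\le 1$, one of them strictly smaller, has $\rho<1$), your argument has three advantages:
\begin{itemize}
\item it is elementary;
\item it yields an explicit geometric rate through $\|M^t\|_\infty\le\|M^n\|_\infty^{\lfloor t/n\rfloor}$;
\item it shows which hypotheses are actually used. Aperiodicity, which the statement's parenthetical wrongly folds into irreducibility, is never used, and strong connectivity can be relaxed to ``every node has a directed path to some stubborn node.''
\end{itemize}

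One wording fix. The induction claim should read ``$(M^r\mathbf{1})_i<1$ whenever $i$ has a path of length at most $r-1$ to a stubborn node.'' As written it is off by one: a non-stubborn node with a stubborn out-neighbor still has $(M\mathbf{1})_i=1-a_i=1$. Since every shortest such path has length at most $n-1$, your conclusion at $r=n$ is unchanged.
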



\subsection{The Poster-Observer Dynamics}

The classic FJ model operates under the assumption that all nodes synchronously ``post'' their opinion and ``observe'' the opinions of their neighbors at each time step. 
In the \emph{Poster-Observer FJ} (PO-FJ) model, which we introduce here, we decouple the processes of ``posting'' and ``observing'' for each node. 


Formally, the state of the network at each iteration $t$ is encoded by two binary \emph{activity vectors}: 
the \emph{observing} activity vector $\mathbf{x}^{(t)} \in \{0,1\}^n$ and the \emph{posting} activity vector $\mathbf{y}^{(t)} \in \{0,1\}^n$.
We set $x_i^{(t)}\!=1$ if agent $i$ observes their feed at iteration~$t$, and $x_i^{(t)}\!=0$ otherwise.
%
%
Similarly, we set $y_i^{(t)}\!=1$ if agent $i$ posts their opinion at iteration~$t$, and $y_i^{(t)}\!=0$ otherwise.
%

The activity vectors directly affect the graph, resulting in a time-dependent, row-stochastic \emph{Dynamic Interaction Matrix} $\hat{W}^{(t)}$:
\begin{equation}
\label{eq:Dynamic Interaction Matrix}
\hat{W}^{(t)} = X^{(t)}[WY^{(t)} + \text{diag}(W(\mathbf{1}-\mathbf{y}^{(t)}))] + (I - X^{(t)}),
\end{equation}
where $X^{(t)} = \text{diag}(\mathbf{x}^{(t)})$ and $Y^{(t)} = \text{diag}(\mathbf{y}^{(t)})$. 
Equivalently, we can think of the model as representing interactions over a dynamic graph where edges are added or deleted at every iteration based on user activity.
For the non-diagonal elements of $\hat{W}^{(t)}$ we have
$$ \hat{W}_{ij}^{(t)} = x_i^{(t)}y_j^{(t)}{W}_{ij}.$$

Note that the base interaction matrix edge weight $W_{ij}$ connects nodes $i$ and $j$, when both node $i$ observes \emph{and} node $j$ posts in iteration $t$. In any other case this edge does not exist in $\hat{W}^{(t)}$.

We enforce row-stochasticity for any resulting $\hat{W}^{(t)}$, via the diagonal elements which are equal to: $$\hat{W}_{ii}^{(t)} = x^{(t)}_i \left( W_{ii} + \sum_{k\neq i}\big(1-y^{(t)}_k\big)W_{ik} \right) + \big(1 - x^{(t)}_i\big).$$

Thus, when a user goes offline and stops observing ($x_i^{(t)}=0$), the entire $i$-th row of $\hat{W}^{(t)}$ becomes zero, except for the diagonal element $\hat{W}_{ii}^{(t)}=1$. 
This effectively removes all \emph{outward} edges from node $i$ for this iteration.
Conversely, when a user does not post ($y_i^{(t)}=0$), the $i$-th column of $\hat{W}^{(t)}$ becomes zero, except for the diagonal element $\hat{W}_{ii}^{(t)}$.
This effectively removes all \emph{inward} edges to node $i$ for this iteration.
In this case, the weights of the removed edges get moved to the diagonal elements of the matrix.
We believe that this transfer is realistic in the following sense: a user's total attention is fixed, and when they are not observing, their focus defaults back to their own previously expressed opinion ($\hat{W}_{ii}^{(t)}=1$). 
Similarly, when a user does not see a post from a neighbor, the weight that would have been spent reading that neighbor's post is transferred back to their own previously-expressed opinion.

With the dynamic interaction matrix $\hat{W}^{(t)}$ established, we can now define the PO-FJ opinion update process for our model.
\begin{definition}[The Poster-Observer Dynamics]
Given the dynamic interaction matrices $\hat{W}^{(t)}$, a diagonal stubbornness matrix $A$, and an inner opinion vector $\mathbf{s}$, the expressed opinions $\hat{\mathbf{z}}^{(t)}$ in the PO-FJ model evolve according to
\begin{equation}
\label{eq:PO_FJ_update}
\mathbf{\hat{z}}^{(t+1)} = A\mathbf{s} + (I-A)\hat{W}^{(t)}\mathbf{\hat{z}}^{(t)}.
\end{equation}
\end{definition}

Notice that  the updates on the PO-FJ expressed opinions~$\hat{\mathbf{z}}$ occur based on the dynamic interaction matrix $\hat{W}$ instead of the base interaction matrix $W$ of the classic FJ model (Eq.~\eqref{eq:FJ_update_vec}).
While the new update Equation~\eqref{eq:PO_FJ_update} can accommodate any arbitrary, time-varying sequence of communications, we choose a specific setup:


\begin{assumption}\label{ass:activity}
For any $t\in\{0,1,\ldots \}$
the activity vectors $\{\mathbf{x}^{(t)}\}_{t\ge0}$ and $\{\mathbf{y}^{(t)}\}_{t\ge0}$ are independent. 
Also they are independent and identically distributed over time $t$, with expectations
\begin{equation}
\label{eq:posting observing vectors}
\mathbb{E}\big[\mathbf{x}^{(t)}\big]=\mathbf{q}\in[0,1]^n,
  ~~~
\mathbb{E}\big[\mathbf{y}^{(t)}\big]=\mathbf{p}\in[0,1]^n ~~~ \mbox{ for all } t,
\end{equation}
and \emph{activity covariance matrices}
\begin{equation}
\label{eq:Sigma covariance matrices}
\operatorname{Cov}\big(\mathbf{x}^{(t)}\big)=\Sigma^x\in\mathbb{R}^{n\times n}, 
  ~~~
\operatorname{Cov}\big(\mathbf{y}^{(t)}\big)=\Sigma^y\in\mathbb{R}^{n\times n} ~~~ \mbox{ for all } t.
\end{equation}
\end{assumption}

Both $\Sigma^x$ and $\Sigma^y$ are symmetric, positive semidefinite matrices with diagonal values $\Sigma^x_{ii}=q_i(1-q_i)$ and $\Sigma^y_{ii}=p_i(1-p_i)$. We call $\mathbf{p}$ the \emph{posting probability vector} and $\mathbf{q}$ the \emph{observing probability vector}. 

Under this framework, we can model different kinds of user behaviors. 
An agent $i$ who posts frequently will have a high value~$p_i$, while an agent who mostly observes what others say without commenting will have a low value~$p_i$ but a high value~$q_i$.
Notice that for $\mathbf{p} = \mathbf{q} = \mathbf{1}$, we get the classic FJ model.

Correlations between agents posting and observing together in groups can be modeled via the \emph{activity covariance matrices}~$\Sigma^x$ and~$\Sigma^y$.
While more complex regimes could be used, such as activity vectors that correlate dynamically with the neighboring opinions or the network state, we will use the framework based on Assumption \ref{ass:activity} for this first application of the PO-FJ Model.

\subsection{On the Convergence of the PO-FJ Model}

The properties of the PO-FJ dynamics can be best understood through their analogy to the classic FJ model.
As established in Theorem~\ref{thm:fj_convergence}, classic FJ opinions deterministically converge to a static equilibrium $\mathbf{z}^*$.
In the PO-FJ model, the dynamic interaction matrix $\hat{W}^{(t)}$ prevents this static convergence: as network edges dynamically form and dissolve as users post and observe, opinions are prevented from ever locking into a single point.
Instead, one of our main results, established in Theorem \ref{thm:no_static_consensus}, shows that under PO-FJ dynamics the network reaches a \emph{probabilistic} equilibrium. 
We prove this in Appendix~\ref{app:no_static_consensus} using the iterated random functions work of \citet{Diaconis}.

\begin{theorem}[PO-FJ Convergence]\label{thm:no_static_consensus}
If matrix $W$ is \emph{irreducible}, the inner opinions are not trivially identical ($s_i \neq s_j$ for some $i,j$), and all agents have stubbornness $a_i \in (0,1)$ and stochastic activity $p_i, q_i \in (0,1)$, the sequence of opinions $\{\mathbf{\hat{z}}^{(t)}\}_{t\ge0}$ of the PO-FJ model never reaches a static equilibrium. 
Instead, the opinions converge to a unique, non-degenerate stationary probability distribution.
\end{theorem}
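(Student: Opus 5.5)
We view the PO-FJ update as an iterated random function system. Define the random affine maps
\[
f_t(\mathbf{z}) = A\mathbf{s} + (I-A)\hat{W}^{(t)}\mathbf{z},
\]
so that $\hat{\mathbf{z}}^{(t+1)} = f_t(\hat{\mathbf{z}}^{(t)})$. By Assumption~\ref{ass:activity}, the matrices $\hat{W}^{(t)}$ are i.i.d., because they are deterministic functions of the i.i.d.\ pairs $(\mathbf{x}^{(t)},\mathbf{y}^{(t)})$. Hence $\{\hat{\mathbf{z}}^{(t)}\}$ is a Markov chain driven by i.i.d.\ random maps, and we will apply the theorem of \citet{Diaconis}. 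That theorem says that if the maps are Lipschitz, the expected log-Lipschitz constant is negative, and a moment condition holds, then the backward iterates converge almost surely and the forward chain converges in distribution to a unique stationary law, geometrically fast and independently of the starting point.

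\emph{Step 1: contraction.} Each $\hat{W}^{(t)}$ is row-stochastic and nonnegative, so $\|\hat{W}^{(t)}\|_\infty = 1$. Let $a_{\min} = \min_i a_i > 0$. Then $\|(I-A)\hat{W}^{(t)}\|_\infty \le 1-a_{\min} < 1$ deterministically. Every $f_t$ is therefore a uniform contraction in $\|\cdot\|_\infty$ with constant $1-a_{\min}$. The Diaconis--Freedman conditions then hold trivially: the log-Lipschitz constant is at most $\log(1-a_{\min})<0$, and $\|f_t(\mathbf{z}_0)-\mathbf{z}_0\|$ is bounded for any fixed $\mathbf{z}_0$. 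Moreover, the chain stays in $[0,1]^n$ whenever $\hat{\mathbf{z}}^{(0)}\in[0,1]^n$. Note that this step does not need irreducibility, whereas Theorem~\ref{thm:fj_convergence} did; irreducibility is used only in Step 2. We conclude that there is a unique stationary distribution $\pi$, and that $\hat{\mathbf{z}}^{(t)}\to\pi$ in distribution from any initial condition, for example in Wasserstein distance at rate $(1-a_{\min})^t$.

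\emph{Step 2: non-degeneracy.} This is the main obstacle: we must show that $\pi$ is not a point mass, which also shows that no static equilibrium is reached. Suppose for contradiction that $\pi=\delta_{\mathbf{z}^\circ}$. Stationarity then forces $f(\mathbf{z}^\circ)=\mathbf{z}^\circ$ almost surely, that is, for every realization $(\mathbf{x},\mathbf{y})$ with positive probability. Since $p_i,q_i\in(0,1)$, each coordinate takes both values with positive probability.

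We would like both $\mathbf{x}=\mathbf{0}$ and $\mathbf{x}=\mathbf{y}=\mathbf{1}$ to occur with positive probability, but correlations in $\Sigma^x,\Sigma^y$ might rule these out. So we instead argue node by node.

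\begin{itemize}
\item Fix a node $i$ and consider realizations with $x_i=0$, which occur with positive probability. Then $\hat{W}^{(t)}$ has row $i$ equal to $e_i^\top$. The fixed-point equation gives $z^\circ_i = a_i s_i + (1-a_i) z^\circ_i$, and since $a_i>0$ we get $z^\circ_i = s_i$. This holds for every $i$, so $\mathbf{z}^\circ=\mathbf{s}$.
\item Next pick an edge $i\to j$ with $W_{ij}>0$ and $i\neq j$; such edges exist by irreducibility. Take a realization with $x_i=1$, which has positive probability. Row $i$ of $f(\mathbf{s})=\mathbf{s}$ gives $\sum_k \hat{W}_{ik}(s_k - s_i)=0$, i.e.
\[
\sum_{k\ne i} x_i y_k W_{ik}(s_k-s_i)=0 .
\]
\end{itemize}

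To extract $s_j=s_i$ from this, we average over the posting vector conditional on $x_i=1$; this is legitimate because $\mathbf{x}$ and $\mathbf{y}$ are independent. Averaging gives $\sum_{k\ne i} p_k W_{ik}(s_k-s_i)=0$. Averaging alone does not isolate $s_j$. So we use the difference between realizations with $y_j=1$ and with $y_j=0$, both of which have positive probability. We condition on $y_j$ and use that the identity holds almost surely: the conditional expectations of the left side given $y_j=1$ and given $y_j=0$ are both zero. Their difference is $W_{ij}(s_j-s_i)$ plus correlation terms coming from $\Sigma^y$.

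If those correlation terms cause trouble, we would instead invoke a mild genericity fact: with probability $1-p_j>0$ node $j$ alone toggles its contribution. In the independent-coordinates case the correlation terms vanish, and we get $s_j=s_i$ for every edge. Irreducibility then propagates this to $s$ being constant, which contradicts the hypothesis $s_i\neq s_j$ for some $i,j$. Hence $\pi$ is non-degenerate.

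Finally, since the law of $\hat{\mathbf{z}}^{(t)}$ converges to a non-degenerate $\pi$, the sequence cannot converge almost surely to a constant vector. Opinions therefore oscillate perpetually instead of settling at a static equilibrium.
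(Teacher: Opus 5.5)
\paragraph{Where the proposal breaks.} Your Step~1 and the first bullet of Step~2 match the paper's proof. That means the uniform $\|\cdot\|_\infty$-contraction with constant $\max_i(1-a_i)<1$ fed into Diaconis--Freedman, and the event $\{x_i=0\}$ forcing any degenerate fixed point to equal $\mathbf{s}$.

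The gap is in the final step, where you must show $\mathbf{s}$ is constant. Conditioning on $y_j$ does not isolate $W_{ij}(s_j-s_i)$ in general. Because $y_j$ is binary, the difference of the two conditional expectations is $W_{ij}(s_j-s_i)+\sum_{k\neq i,j}W_{ik}(s_k-s_i)\,\Sigma^y_{kj}/\big(p_j(1-p_j)\big)$. The theorem allows arbitrary $\Sigma^y$ under Assumption~\ref{ass:activity}, so these extra terms need not vanish.

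Your ``genericity fact'' does not hold in general. If, for example, $y_j=y_k$ almost surely for another out-neighbor $k$ of $i$, node $j$ never toggles alone. Then no single-edge identity can be read off realization by realization. As written, your argument only covers diagonal $\Sigma^y$.

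\paragraph{The missing idea.} The averaged identity you derived and then set aside already suffices. Since $\sum_j W_{ij}=1$, one checks that $(\overline{W}\mathbf{s})_i-s_i=q_i\sum_{k\neq i}p_kW_{ik}(s_k-s_i)$. So your identity, holding for all $i$, says exactly $\overline{W}\mathbf{s}=\mathbf{s}$.

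You do not need to isolate individual edges. Take $i^\star\in\arg\max_i s_i$. Every summand in its identity is $\le 0$, and $p_kW_{i^\star k}>0$ for each out-neighbor $k$, so all out-neighbors of $i^\star$ are also maximizers. Strong connectivity propagates this to every node, giving $\mathbf{s}=c\mathbf{1}$, which contradicts the hypothesis.

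This maximum-principle argument is the elementary form of what the paper does. The paper takes the expectation of the fixed-point equation to obtain $\overline{W}\mathbf{s}=\mathbf{s}$. It then notes that $\overline{W}$ has the same off-diagonal support as $W$, hence is irreducible and row-stochastic. By Perron--Frobenius, its eigenvalue-$1$ eigenspace is spanned by $\mathbf{1}$.

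The argument uses only $p_k,q_i>0$ and the independence of $\mathbf{x}$ from $\mathbf{y}$, so it holds for any activity covariances.
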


Theorem~\ref{thm:no_static_consensus} shows that under PO-FJ dynamics opinions $\hat{\mathbf{z}}^{(t)}$ follow stochastic paths that indefinitely oscillate, as visualized in Figure~\ref{fig:trajectory_comparison}. 
Under classic FJ (top), any initial opinion vector $\mathbf{z}^{(0)} \in [0,1]^n$ moves deterministically toward the static equilibrium $\mathbf{z}^*$ (red point).
Under PO-FJ dynamics (bottom), opinions $\hat{\mathbf{z}}^{(t)}$ follow stochastic trajectories that converge to a unique stationary probability distribution, depicted via its \emph{stationary expected opinion} value $\overline{\boldsymbol{\mu}}$ (red star) and \emph{stationary covariance} $\overline{M}$ (red shaded volume).
After reaching the distribution the opinions $\hat{\mathbf{z}}^{(t)}$ perpetually oscillate around it.
We next derive closed-form expressions for this trajectory.

\begin{figure}[t]
    \centering
    \includegraphics[width=0.9\columnwidth]{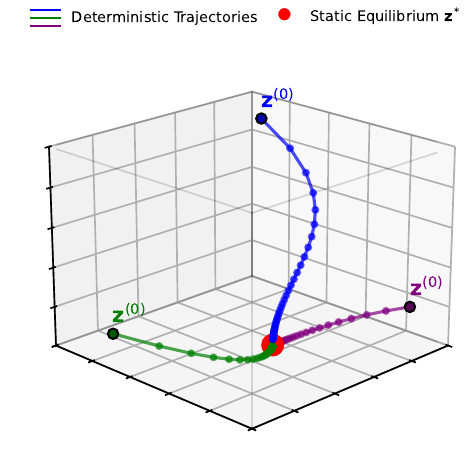}
    
    \vspace{2mm} 
    
    \includegraphics[width=0.9\columnwidth]{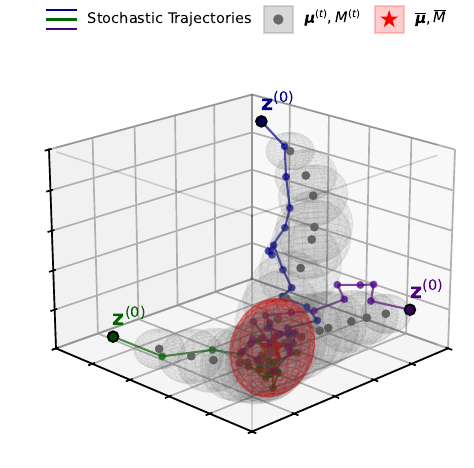}
    
    \caption{Opinion trajectories in the n-dimension hypercube $[0,1]^n$. \textbf{Top:} Classic FJ model with deterministic convergence. \textbf{Bottom:} PO-FJ model with stochastic opinion trajectories.}
    \label{fig:trajectory_comparison}
\end{figure}

\spara{Expected Opinions ($\boldsymbol{\mu}^{(t)}$):}
To analyze the expected trajectory of opinions under PO-FJ, we define the \emph{Expected Interaction Matrix} $\overline{W}$:

\begin{definition}[Expected Interaction Matrix]\label{def:expected_W}
The expectation of the dynamic interaction matrix $\overline{W} := \mathbb{E}\big[\hat{W}^{(t)}\big]$ is given by:
\[
  \overline{W} = \operatorname{diag}(\mathbf{q}) \Big[\, W \operatorname{diag}(\mathbf{p}) + \operatorname{diag}\!\big(W(\mathbf{1}-\mathbf{p})\big) \,\Big] + \big(I - \operatorname{diag}(\mathbf{q})\big).
\]
\end{definition}
\noindent By construction, $\overline{W}$ remains row-stochastic. Furthermore, provided that the base interaction matrix $W$ is strongly connected and $q_i, p_i > 0$ for all nodes, $\overline{W}$ inherits the strong connectivity and aperiodicity of $W$. By taking the expectation of the opinion update rule (Eq.~\eqref{eq:PO_FJ_update}), we establish the following result in Appendix~\ref{app:Lemma1}:

\begin{lemma}[Expected Opinions]\label{lm:expected_opinions}
The expected opinions defined as $\boldsymbol{\mu}^{(t)} := \mathbb{E}\big[\mathbf{\hat{z}}^{(t)}\big]$ at each iteration $t$ evolve according to:
\begin{equation}\label{eq:meanrec}
  \boldsymbol{\mu}^{(t+1)} = A \mathbf{s} + (I - A)\overline{W}\,\boldsymbol{\mu}^{(t)}.
\end{equation}
This trajectory converges to the \emph{stationary expected opinion} vector:
\begin{equation}\label{eq:meanss}
  \overline{\boldsymbol{\mu}} := \lim_{t \to \infty} \boldsymbol{\mu}^{(t)} = \big(I - (I-A)\overline{W}\big)^{-1}A\,\mathbf{s} = \overline{V}\,\mathbf{s},
\end{equation}
where $\overline{V} := \big(I - (I-A)\overline{W}\big)^{-1}A$ is the expected influence matrix.
\end{lemma}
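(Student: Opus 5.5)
The plan has two parts: derive the mean recursion \eqref{eq:meanrec}, then show it converges to $\overline{\boldsymbol{\mu}}$. Both rest on independence and on the contraction supplied by stubbornness.

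\textbf{Step 1 (mean recursion).} Take expectations in \eqref{eq:PO_FJ_update}:
\[
\boldsymbol{\mu}^{(t+1)} = A\mathbf{s} + (I-A)\,\mathbb{E}\big[\hat{W}^{(t)}\hat{\mathbf{z}}^{(t)}\big].
\]
The state $\hat{\mathbf{z}}^{(t)}$ is a measurable function of $\hat{\mathbf{z}}^{(0)}$ and of the activity vectors $\mathbf{x}^{(0)},\dots,\mathbf{x}^{(t-1)},\mathbf{y}^{(0)},\dots,\mathbf{y}^{(t-1)}$. The matrix $\hat{W}^{(t)}$ depends only on $(\mathbf{x}^{(t)},\mathbf{y}^{(t)})$. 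By Assumption~\ref{ass:activity} these are independent of the past activity, and we take the initial state to be deterministic or independent of the activity. Hence $\hat{W}^{(t)}$ is independent of $\hat{\mathbf{z}}^{(t)}$, and
\[
\mathbb{E}\big[\hat{W}^{(t)}\hat{\mathbf{z}}^{(t)}\big]=\mathbb{E}\big[\hat{W}^{(t)}\big]\,\boldsymbol{\mu}^{(t)}.
\]
Every entry involved is bounded in $[0,1]$, so all expectations exist.

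It remains to check that $\mathbb{E}[\hat{W}^{(t)}]$ equals the $\overline{W}$ of Definition~\ref{def:expected_W}. Off the diagonal, $\hat{W}^{(t)}_{ij}=x_i y_j W_{ij}$. Since $\mathbf{x}$ and $\mathbf{y}$ are independent, $\mathbb{E}[x_i y_j]=q_i p_j$, which matches $\overline{W}_{ij}$. On the diagonal, the expression is affine in each $x_i$ and in the products $x_i(1-y_k)$, so we get
\[
q_i\Big(W_{ii}+\sum_{k\ne i}(1-p_k)W_{ik}\Big)+(1-q_i).
\]
This matches the diagonal of $\overline{W}$: the term $W_{ii}p_i+W_{ii}(1-p_i)=W_{ii}$ recombines the $k=i$ contribution. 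This yields \eqref{eq:meanrec}.

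\textbf{Step 2 (convergence).} Let $B:=(I-A)\overline{W}$. Since $\overline{W}$ is row-stochastic and nonnegative, and every $a_i>0$, we have
\[
\|B\|_\infty=\max_i(1-a_i)=:\gamma<1.
\]
Thus $\rho(B)<1$, so $I-B$ is invertible with Neumann series $\sum_k B^k$, and $\overline{\boldsymbol{\mu}}=(I-B)^{-1}A\mathbf{s}$ is the unique fixed point of the affine map. Subtracting the fixed-point equation from \eqref{eq:meanrec} gives
\[
\boldsymbol{\mu}^{(t+1)}-\overline{\boldsymbol{\mu}} = B\big(\boldsymbol{\mu}^{(t)}-\overline{\boldsymbol{\mu}}\big),
\]
so $\|\boldsymbol{\mu}^{(t)}-\overline{\boldsymbol{\mu}}\|_\infty\le\gamma^t\|\boldsymbol{\mu}^{(0)}-\overline{\boldsymbol{\mu}}\|_\infty\to 0$. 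This gives \eqref{eq:meanss}.

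\textbf{Where the difficulty lies.} The only place needing care is convergence when some $a_i=0$, which is allowed in the FJ setting of Theorem~\ref{thm:fj_convergence}. There I would not use the row-sum bound. Instead I would argue as in Theorem~\ref{thm:fj_convergence}, with $\overline{W}$ in place of $W$: $\overline{W}$ is irreducible because $p_i,q_i>0$ preserve every edge of $W$, and it has positive diagonal whenever some $q_i<1$ or $p_k<1$. Then $(I-A)\overline{W}$ is a nonnegative irreducible matrix with at least one row sum strictly below $1$, hence substochastic with $\rho<1$. If $a_i\in(0,1)$ for all $i$, as in Theorem~\ref{thm:no_static_consensus}, the simple $\infty$-norm contraction suffices.
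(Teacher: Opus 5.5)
Your proposal is correct and follows essentially the same route as the paper: you take expectations in \eqref{eq:PO_FJ_update}, use that $\hat{W}^{(t)}$ depends only on the time-$t$ activity and is therefore independent of $\hat{\mathbf{z}}^{(t)}$, and recognize the resulting recursion as classic FJ dynamics with $\overline{W}$ in place of $W$. Your additions make the argument more self-contained and, when all $a_i>0$, remove the need for $\overline{W}$ to be irreducible (i.e., for $p_i,q_i>0$), which matters for CLEAR where $p_i,q_i$ may reach $0$; these additions are the entrywise check that $\mathbb{E}[\hat{W}^{(t)}]=\overline{W}$ (which the paper takes as Definition~\ref{def:expected_W}) and the direct bound $\|(I-A)\overline{W}\|_\infty=\max_i(1-a_i)<1$ in place of the paper's appeal to Theorem~\ref{thm:fj_convergence}.
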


Lemma~\ref{lm:expected_opinions} shows that while the actual opinions $\mathbf{\hat{z}}^{(t)}$ undergo perpetual stochastic oscillations, their expectation $\boldsymbol{\mu}^{(t)}$ follows classic FJ dynamics under the expected interaction matrix $\overline{W}$. 
Since $\overline{W}$ preserves the sparsity of $W$, calculating $\overline{\boldsymbol{\mu}}$ can be efficiently done by sequentially applying Equation~\eqref{eq:meanrec}, similarly to the classic FJ.

\spara{Opinion Covariance ($M^{(t)}$):}
To characterize the amplitude and correlation of the opinion oscillations around $\boldsymbol{\mu}^{(t)}$, we define the \emph{opinion covariance matrix} $M^{(t)}$ at iteration $t$ as:
\begin{equation}
    M^{(t)} := \operatorname{Cov}(\mathbf{\hat{z}}^{(t)}) = \mathbb{E} \Big[ \big(\mathbf{\hat{z}}^{(t)} - \boldsymbol{\mu}^{(t)}\big) \big(\mathbf{\hat{z}}^{(t)} - \boldsymbol{\mu}^{(t)}\big)^T \Big].
\end{equation}
The diagonal elements $M^{(t)}_{ii} = \operatorname{Var}(\hat{z}_{i}^{(t)})$ quantify the variance of each user's opinion, while the off-diagonal elements $M^{(t)}_{ij} = \operatorname{cov}(\hat{z}_{i}^{(t)},\hat{z}_{j}^{(t)})$ capture the co-movement between users. 
For instance, if the opinions of two users $i$ and $j$ tend to oscillate together in the same direction, they will exhibit a high $M^{(t)}_{ij}$ value.

To formalize the evolution of $M^{(t)}$, we define the following operators: $B_\mathbf{v} := W\operatorname{diag}(\mathbf{v}) - \operatorname{diag}(\mathbf{v})W$ for any vector $\mathbf{v}$, and $\mathcal{G}(C, D) := W(C \circ D)W^T - W\big(C \circ (DW^T)\big) - \big((WD) \circ C\big)W^T + C \circ (WDW^T)$ for any matrices $C$ and $D$. We prove in Appendix~\ref{app:Lemma2} that the opinion covariance matrix $M^{(t)}$ evolves as follows:

\begin{lemma}[Opinion Covariance Matrix]\label{lm:covariance}
The opinion covariance matrix $M^{(t)}$ at each iteration $t$ evolves according to:
\begin{equation}\label{eq:covrec}
  M^{(t+1)} = F\big(M^{(t)}\big) + \Omega^{(t)},
\end{equation}
where the operator $F(M^{(t)})$ is defined as:
\begin{equation*}
    F(M^{(t)}) = (I-A) \Big[ \overline{W} M^{(t)} \overline{W}^T + \mathcal{H}(M^{(t)}) \Big] (I-A),
\end{equation*}
and $\Omega^{(t)}$ is defined as:
\begin{equation*}
    \Omega^{(t)} = (I-A) \Big[ (\Sigma^x + \mathbf{q}\mathbf{q}^T) \circ \big( B_{\boldsymbol{\mu}^{(t)}} \Sigma^y B_{\boldsymbol{\mu}^{(t)}}^T \big) + \Sigma^x \circ \big( B_{\boldsymbol{\mu}^{(t)}} \mathbf{p}\mathbf{p}^T B_{\boldsymbol{\mu}^{(t)}}^T \big) \Big] (I-A).
\end{equation*}
The function $\mathcal{H}(M^{(t)})$ is evaluated as:
\begin{equation*}
    \mathcal{H}(M^{(t)}) = (\Sigma^x + \mathbf{q}\mathbf{q}^T) \circ \mathcal{G}\big(M^{(t)}, \Sigma^y\big) + \Sigma^x \circ \mathcal{G}\big(M^{(t)}, \mathbf{p}\mathbf{p}^T\big).
\end{equation*}
\end{lemma}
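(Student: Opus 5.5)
We would start by centering the update rule. Write $\hat{W}^{(t)} = \overline{W} + \Delta^{(t)}$, where $\Delta^{(t)} := \hat{W}^{(t)} - \overline{W}$ has zero mean. Write also $\hat{\mathbf{z}}^{(t)} = \boldsymbol{\mu}^{(t)} + \mathbf{e}^{(t)}$ with $\mathbb{E}[\mathbf{e}^{(t)}]=0$. Subtracting the mean recursion of Lemma~\ref{lm:expected_opinions} from Eq.~\eqref{eq:PO_FJ_update} gives
\[
\mathbf{e}^{(t+1)} = (I-A)\big[\overline{W}\mathbf{e}^{(t)} + \Delta^{(t)}\mathbf{e}^{(t)} + \Delta^{(t)}\boldsymbol{\mu}^{(t)}\big].
\]
The key structural fact, from Assumption~\ref{ass:activity}, is that $\Delta^{(t)}$ depends only on $(\mathbf{x}^{(t)},\mathbf{y}^{(t)})$. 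These are independent of $\hat{\mathbf{z}}^{(t)}$, which is a function of activity up to time $t-1$ and the deterministic $\hat{\mathbf{z}}^{(0)}$. Taking $\mathbb{E}[\mathbf{e}^{(t+1)}\mathbf{e}^{(t+1)T}]$, all cross terms containing exactly one factor of $\Delta^{(t)}$ vanish, and $\mathbb{E}[\overline{W}\mathbf{e}\mathbf{e}^T\Delta^T]=0$. The term $\mathbb{E}[\Delta\mathbf{e}\boldsymbol{\mu}^T\Delta^T]$ also vanishes, because we can condition on $\Delta$ and use $\mathbb{E}[\mathbf{e}^{(t)}]=0$. We are left with
\[
M^{(t+1)} = (I-A)\Big[\overline{W}M^{(t)}\overline{W}^T + \mathbb{E}\big[\Delta M^{(t)}\Delta^T\big] + \mathbb{E}\big[\Delta\boldsymbol{\mu}\boldsymbol{\mu}^T\Delta^T\big]\Big](I-A),
\]
where the middle term again uses conditioning on $\Delta$. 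The remaining work is to identify the second term with $\mathcal{H}(M^{(t)})$ and the third with the bracket of $\Omega^{(t)}$.

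For this we use a compact form of $\hat{W}^{(t)}$. Since $\operatorname{diag}(W(\mathbf{1}-\mathbf{y})) - \operatorname{diag}(W\mathbf{1}) = -\operatorname{diag}(W\mathbf{y})$ and $W\mathbf{1}=\mathbf{1}$, we get
\[
\hat{W}\mathbf{v} = \mathbf{v} + X\big(WY\mathbf{v} - \operatorname{diag}(W\mathbf{y})\mathbf{v}\big) = \mathbf{v} + X\,(W\operatorname{diag}(\mathbf{v}) - \operatorname{diag}(\mathbf{v})W)\,\mathbf{y} = \mathbf{v} + X B_{\mathbf{v}}\mathbf{y}.
\]
Hence $\Delta\mathbf{v} = XB_{\mathbf{v}}\mathbf{y} - \operatorname{diag}(\mathbf{q})B_{\mathbf{v}}\mathbf{p}$, which is linear in $\mathbf{v}$ and bilinear in $(\mathbf{x},\mathbf{y})$.

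For the deterministic vector $\boldsymbol{\mu}$, we compute $\operatorname{Cov}(XB\mathbf{y})$ with $X$ and $\mathbf{y}$ independent. Using $\mathbb{E}[\mathbf{x}\mathbf{x}^T]=\Sigma^x+\mathbf{q}\mathbf{q}^T$ and $\mathbb{E}[\mathbf{y}\mathbf{y}^T]=\Sigma^y+\mathbf{p}\mathbf{p}^T$, and the identity $\operatorname{diag}(\mathbf{x})C\operatorname{diag}(\mathbf{x})=(\mathbf{x}\mathbf{x}^T)\circ C$,
\[
\mathbb{E}[XB\mathbf{y}\mathbf{y}^TB^TX] - \operatorname{diag}(\mathbf{q})B\mathbf{p}\mathbf{p}^TB^T\operatorname{diag}(\mathbf{q}) = (\Sigma^x+\mathbf{q}\mathbf{q}^T)\circ(B\Sigma^yB^T) + \Sigma^x\circ(B\mathbf{p}\mathbf{p}^TB^T).
\]
This is exactly the bracket in $\Omega^{(t)}$.

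For the random $\mathbf{e}$ term, we write $M=\mathbb{E}[\mathbf{e}\mathbf{e}^T]$ and note that $B_{\mathbf{e}}\mathbf{y} = W(\mathbf{e}\circ\mathbf{y}) - \mathbf{e}\circ(W\mathbf{y})$. Then
\[
\mathbb{E}_{\mathbf{e}}\big[B_{\mathbf{e}}D' B_{\mathbf{e}}^T\big] \quad\text{with } D'=\mathbf{y}\mathbf{y}^T
\]
expands into four terms. These are $W(M\circ D')W^T$, the two cross terms $W(M\circ(D'W^T))$ and $((WD')\circ M)W^T$, and $M\circ(WD'W^T)$. Together they give $\mathcal{G}(M,D')$. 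Applying the same covariance-splitting over $\mathbf{x}$ and $\mathbf{y}$, and using linearity of $\mathcal{G}$ in $D$, yields $\mathcal{H}(M^{(t)})$.

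The main obstacle is the bookkeeping in this last step. We must expand $\mathcal{G}$ correctly, with the right placement of Hadamard products and transposes in the cross terms, using identities such as $(\mathbf{e}\circ\mathbf{a})(\mathbf{e}\circ\mathbf{b})^T=(\mathbf{e}\mathbf{e}^T)\circ(\mathbf{a}\mathbf{b}^T)$. We must also justify the separation of the $\mathbf{e}$-expectation from the $(\mathbf{x},\mathbf{y})$-expectation, which follows from the independence noted above.
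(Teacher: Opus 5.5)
Your proposal is correct and takes essentially the same route as the paper's proof. You center via $\mathbf{e}^{(t)}=\hat{\mathbf{z}}^{(t)}-\boldsymbol{\mu}^{(t)}$, kill the cross terms using the independence of the time-$t$ activity from $\hat{\mathbf{z}}^{(t)}$ together with $\mathbb{E}[\mathbf{e}^{(t)}]=\mathbf{0}$, and split the result into $F(M^{(t)})$ and $\Omega^{(t)}$; your identity $\hat{W}\mathbf{v}=\mathbf{v}+XB_{\mathbf{v}}\mathbf{y}$, combined with $\operatorname{diag}(\mathbf{x})C\operatorname{diag}(\mathbf{x})=(\mathbf{x}\mathbf{x}^T)\circ C$ and conditioning on $\mathbf{e}$, makes explicit the Hadamard-product computation of $\mathcal{H}$ and $\Omega^{(t)}$ that the paper only sketches.
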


Lemma~\ref{lm:covariance} shows that covariance updates through two distinct forces: $F(M^{(t)})$ propagates and decays the existing variance, while $\Omega^{(t)}$ continuously injects new variance.
Both of these forces depend not only on the activity probabilities ($\mathbf{p}, \mathbf{q}$) but also on the activity covariance ($\Sigma^x, \Sigma^y$), showing that synchronized user activity can directly affect the opinion oscillations.
By the convergence result of Theorem~\ref{thm:no_static_consensus}, applying iteratively Equation~\eqref{eq:covrec} converges to the unique \emph{stationary opinion covariance matrix} $\overline{M} := \lim_{t\to \infty}M^{(t)}$.


\section{Conflict Limitation via Exposure and Amplification Regulation (CLEAR)}
\label{sec:metrics}
\label{sec:mitigation}

A central goal in the opinion-dynamics literature is to quantify and control network-level phenomena such as polarization and disagreement \cite{Howbad, matakos2017measuring, Musco,kuhne2025optimizing}. 
In this section, we first establish how these classic metrics are generalized under the PO-FJ dynamics. 
Then, we introduce the CLEAR problem of mitigating network conflict by controlling the new levers of the PO-FJ model.

\subsection{Polarization and Disagreement under PO-FJ}

In classic FJ opinion dynamics, \emph{Public Opinion} ($\mathcal{Z}_{\text{FJ}}$), \emph{Polarization} ($\mathcal{P}_{\text{FJ}}$), and \emph{Disagreement} ($\mathcal{D}_{\text{FJ}}$) are well-established metrics \cite{Howbad, kuhne2025optimizing,Musco} for evaluating the state of a network:
\begin{equation}\label{eq:po_classic}
\mathcal{Z}_{\text{FJ}} = \frac{1}{n} \sum_{i=1}^n z_i^*,
\end{equation}
\begin{equation}\label{eq:pol_classic}
\mathcal{P}_{\text{FJ}} = \frac{1}{n} \sum_{i=1}^n \left(z_i^* - \mathcal{Z}_{\text{FJ}}\right)^2, \mbox{ and}
\end{equation}
\begin{equation}\label{eq:dis_classic}
\mathcal{D}_{\text{FJ}} = \frac{1}{2n} \sum_{i=1}^n \sum_{j=1}^n W_{ij} \left(z_i^* - z_j^*\right)^2. 
\end{equation}
In the classic setting, these metrics are evaluated at the static equilibrium $\mathbf{z}^*$ from Eq.~\eqref{eq:FJ_equilibrium}. 
However, as established in Theorem~\ref{thm:no_static_consensus}, the PO-FJ model prevents convergence to a static point and causes opinions to perpetually oscillate. 
Consequently, at any time\-step~$t$, these quantities become random variables. We define them using the PO-FJ expressed opinions $\mathbf{\hat{z}}^{(t)}$ and dynamic interactions~$\hat{W}^{(t)}_{ij}$:
\begin{equation}\label{eq:po_POFJ}
\mathcal{Z}^{(t)} := \frac{1}{n} \sum_{i=1}^n \hat{z}_i^{(t)},
\end{equation}
\begin{equation}\label{eq:pol_POFJ}
\mathcal{P}^{(t)} := \frac{1}{n} \sum_{i=1}^n \left(\hat{z}_i^{(t)} - \mathcal{Z}^{(t)}\right)^2, \mbox{ and}
\end{equation}
\begin{equation}\label{eq:dis_POFJ}
\mathcal{D}^{(t)} := \frac{1}{2n} \sum_{i=1}^n \sum_{j=1}^n \hat{W}^{(t)}_{ij} \left(\hat{z}_i^{(t)} - \hat{z}_j^{(t)}\right)^2. 
\end{equation}

$\mathcal{Z}^{(t)}$, $\mathcal{P}^{(t)}$, and $\mathcal{D}^{(t)}$ fluctuate at every iteration $t$.
However, because Theorem~\ref{thm:no_static_consensus} guarantees that the opinions $\{\mathbf{\hat{z}}^{(t)}\}_{t \ge 0}$ converge to a stationary probability distribution, we can show that the expected values of these metrics also converge to stationary limits. 
Thus, to evaluate the state of the network we take the limit of the expectation of these values, and we define the \emph{Stationary Expected Public Opinion} ($\overline{\mathcal{Z}} := \lim_{t \to \infty}\mathbb{E} [ \mathcal{Z}^{(t)} ]$), \emph{Stationary Expected Polarization} ($\overline{\mathcal{P}} := \lim_{t \to \infty}\mathbb{E}[\mathcal{P}^{(t)}]$), and \emph{Stationary Expected Disagreement} ($\overline{\mathcal{D}} := \lim_{t \to \infty}\mathbb{E}[\mathcal{D}^{(t)}]$). 
We prove in Appendix~\ref{app:PolDis} that:
\begin{equation}\label{eq:po_POFJ_stas}
\overline{\mathcal{Z}} = \frac{1}{n} \sum_{i=1}^n \overline{\mu}_i,
\end{equation}
\begin{equation}\label{eq:pol_POFJ_stas}
\overline{\mathcal{P}} = \frac{1}{n} \sum_{i=1}^n \left(\overline{\mu}_i - \overline{\mathcal{Z}} \right)^2 + \frac{1}{n} \sum_{i=1}^n \overline{M}_{ii} - \frac{1}{n^2} \sum_{i=1}^n \sum_{j=1}^n \overline{M}_{ij},
\end{equation}
\begin{equation}\label{eq:dis_POFJ_stas}
\begin{aligned}
\overline{\mathcal{D}}  = &\frac{1}{2n} \sum_{i=1}^n \sum_{j=1}^n \overline{W}_{ij} (\overline{\mu}_i - \overline{\mu}_j)^2 \\ 
& + \frac{1}{2n} \sum_{i=1}^n (1 + \overline{d}_i^{\text{in}}) \overline{M}_{ii} - \frac{1}{n} \sum_{i=1}^n \sum_{j=1}^n \overline{W}_{ij} \overline{M}_{ij},
\end{aligned}
\end{equation}
where $\overline{d}_i^{\text{in}} = \sum_k \overline{W}_{ki}$ is the expected in-degree of node $i$.
Equations~\eqref{eq:pol_POFJ_stas} and~\eqref{eq:dis_POFJ_stas} reveal a fundamental difference between the PO-FJ metrics and those of the classic FJ model. 
Both $\overline{\mathcal{P}}$ and $\overline{\mathcal{D}}$ decompose into two parts: ($i$) a ``classic'' term evaluated at the stationary expected opinions $\overline{\boldsymbol{\mu}}$ and expected interaction matrix $\overline{W}$, plus ($ii$) two new terms governed by the stationary opinion covariance matrix $\overline{M}$. 
These covariance terms show that individual opinion variances ($\overline{M}_{ii} > 0$) strictly increase both metrics: the more intensely nodes oscillate around their expected opinions under PO-FJ, the higher the resulting polarization and disagreement. 
At the same time, positive covariances between nodes ($\overline{M}_{ij} > 0$) reduce both metrics when users' opinions fluctuate together in the same direction. 
Consequently, even if two connected nodes share the exact same expected opinion ($\overline{\mu}_i = \overline{\mu}_j$), classic FJ would report zero disagreement, whereas PO-FJ captures the disagreement that arises when their opinions oscillate in opposite directions.

\subsection{The CLEAR Problem}

Existing literature on controlling the metrics defined in Eqs.~\eqref{eq:po_classic}, \eqref{eq:pol_classic}, \eqref{eq:dis_classic} focuses on modifying classic FJ parameters, such as changing graph edges or shifting inner opinions \cite{gionis, matakos2020tell, Tsourakakis,  Xu2022}. 
However, directly altering a social graph or inner opinions is unrealistic in practice, as network controllers cannot dictate whom users follow or what they believe in.
Instead, actual control in modern social media is mediated by recommendation algorithms: the platform can choose to \emph{amplify} or \emph{throttle} a user. 
This can be modeled under the PO-FJ dynamics, since adjustments in $\mathbf{p}$ and $\mathbf{q}$ have a natural interpretation in modern web platforms: vector $\mathbf{p}$ models algorithmic amplification of a user, while vector $\mathbf{q}$ models the exposure of a node to its surroundings.

To adjust both polarization and disagreement, we define the overall \emph{Conflict} of the network under the PO-FJ model as:
\begin{equation}\label{eq:conflict}
\overline{\mathcal{C}} = \overline{\mathcal{P}} + \overline{\mathcal{D}}.
\end{equation}
We note that a weighted version of conflict can be introduced, as discussed by \citet{Musco}, by weighting differently the two terms 
depending on application requirements. 
We focus on the standard unweighted sum in Equation~\eqref{eq:conflict} for this first study.

By treating the vectors $\mathbf{p}, \mathbf{q} \in [0,1]^n$ as tunable parameters with an intervention $L_1$ budget $b$, we formulate the Conflict Limitation via Exposure and Amplification Regulation (CLEAR) problem:

\begin{problem}[CLEAR]\label{prob:clear}
Given a PO-FJ network and initial vectors $\mathbf{p}^{(0)}, \mathbf{q}^{(0)} \in [0,1]^n$, find the adjusted vectors $\mathbf{p}$ and $\mathbf{q}$ that minimize conflict, subject to an intervention budget $b$:
\begin{equation}
\begin{aligned}
    \min_{\mathbf{p}, \mathbf{q}} \quad & \overline{\mathcal{C}}(\mathbf{p}, \mathbf{q}) \\
    \text{such that} \quad & \|\mathbf{p} - \mathbf{p}^{(0)}\|_1 + \|\mathbf{q} - \mathbf{q}^{(0)}\|_1 \le b, \\
    & \mathbf{0} \le \mathbf{p} \le \mathbf{1}, \quad \mathbf{0} \le \mathbf{q} \le \mathbf{1}.
\end{aligned}
\end{equation}
\end{problem}

The objective $\overline{\mathcal{C}}(\mathbf{p}, \mathbf{q})$ is a highly complex function that depends on $\mathbf{p}$ and $\mathbf{q}$ through $\overline{W}$, $\overline{\boldsymbol{\mu}}$, and $\overline{M}$. 
While finding the global optimum solution for the CLEAR problem is difficult, we can compute reliable solutions via Projected Gradient Descent (PGD), as detailed in Algorithm~\ref{alg:clear_pgd} (\text{CLEAR-PGD}).
A key challenge in applying gradient-based optimization in our setup is computing the derivatives of $\overline{\mathcal{C}}$ with respect to $\mathbf{p}$ and $\mathbf{q}$.
Differentiating at the closed-form equilibrium equations is doable for $\overline{\boldsymbol{\mu}}$ but \emph{not} for $\overline{M}$.
To overcome this, we treat the iterative update equations ~\eqref{eq:meanrec} and~\eqref{eq:covrec} for $\boldsymbol{\mu}^{(t)}$ and $M^{(t)}$ as a recurrent sequence, similar to the forward pass in classic neural network training.
We first unroll these dynamics for $T_{\mu}$ and $T_{M}$ mixing steps until they converge to $\overline{\boldsymbol{\mu}}$, $\overline{M}$, and thus $\overline{\mathcal{C}}$ (Step 1).
Next we execute Backpropagation through the unrolled steps to compute $\nabla_{\mathbf{p}} \overline{\mathcal{C}}$ and $\nabla_{\mathbf{q}} \overline{\mathcal{C}}$ (Step 2).
Backpropagation through classic FJ dynamics was recently used in \cite{kuhne2025optimizing}, and we now apply it for the expected opinions and the opinion covariance matrix.
We then update $(\mathbf{p}, \mathbf{q})$ (Step 3) and project the updated vectors back onto the $[0,1]$ box bounds and the $L_1$ budget (Step 4), repeating until the change in conflict falls below a tolerance $\tau$.

\spara{Computational Complexity:} 
The Forward Pass (Step 1) requires evaluating $\boldsymbol{\mu}^{(t)}$ and $M^{(t)}$ across $T_{\mu}$ and $T_{M}$ mixing steps, respectively. 
Because matrix $\overline{M}$ is dense, we apply a $K$-hop approximation that retains entries between nodes at most $K$ hops apart (See Appendix~\ref{app:khop_approx}); setting $K=1$ results in $\mathcal{O}((T_{\mu}+T_{M}) |E|)$ complexity for Step~1.
The Backward Pass (Step 2) computes gradients $\nabla_{(\mathbf{p},\mathbf{q})} \overline{\mathcal{C}}$, and has the same complexity as Step 1. 
The Update in Step 3 is an element-wise vector operation taking $\mathcal{O}(n)$ time, while the projection in Step 4 is solved via a sorting-based $L_1$-projection in~$\mathcal{O}(n \log n)$ time. 
Thus, each PGD iteration is dominated by the forward and backward passes, yielding an overall complexity for Algorithm~\ref{alg:clear_pgd} of $\mathcal{O}(T_{\text{opt}} \left( (T_{\mu} + T_{M})  |E| + n\log n)\right)$. 
As demonstrated empirically in Section~\ref{sec:experiments}, the mixing ($T_{\mu}, T_{M}$) and optimization steps ($T_{\text{opt}}$) remain small in practice, ensuring a scalable solution for Problem \ref{prob:clear}.

\begin{algorithm}[t]
\caption{CLEAR-PGD}
\label{alg:clear_pgd}
\begin{algorithmic}[1]
\State \textbf{Input:} Base interaction matrix $W$, initial vectors $\mathbf{p}^{(0)}, \mathbf{q}^{(0)}$, inner opinions $\mathbf{s}$, stubbornness $\mathbf{a}$, budget $b$, tolerance $\tau$
\State \textbf{Output:} Optimized $\mathbf{p}$ and $\mathbf{q}$ vectors
\State Initialize $\mathbf{p} \leftarrow \mathbf{p}^{(0)}$, $\mathbf{q} \leftarrow \mathbf{q}^{(0)}$, $\overline{\mathcal{C}}^{(0)} \leftarrow \infty$
\For{$t = 1$ to $T_{\text{opt}}$}
    \State \textbf{Step 1: Forward Pass}
    \State Compute $\overline{\boldsymbol{\mu}}$, $\overline{M}$ via Eqs.~\eqref{eq:meanrec}, \eqref{eq:covrec} and conflict $\overline{\mathcal{C}}^{(t)}$.
    \If{$|\overline{\mathcal{C}}^{(t-1)} - \overline{\mathcal{C}}^{(t)}| < \tau$} \textbf{break} \EndIf
    \State \textbf{Step 2: Backward Pass}
    \State Compute gradients $\nabla_{(\mathbf{p},\mathbf{q})} \overline{\mathcal{C}}^{(t)}$ via Backpropagation.
    \State \textbf{Step 3: Update}
    \State $(\tilde{\mathbf{p}}, \tilde{\mathbf{q}}) \leftarrow (\mathbf{p}, \mathbf{q}) - \eta \nabla_{(\mathbf{p}, \mathbf{q})} \overline{\mathcal{C}}^{(t)}$
    \State \textbf{Step 4: Projection}
    \State Enforce $L_1$ budget and $[0,1]$ box bounds:
    \State $(\mathbf{p}, \mathbf{q}) \leftarrow \text{Project}(\tilde{\mathbf{p}}, \tilde{\mathbf{q}}, \mathbf{p}^{(0)}, \mathbf{q}^{(0)}, b)$
\EndFor
\State \Return $\mathbf{p}, \mathbf{q}$
\end{algorithmic}
\end{algorithm}

We also consider a combinatorial variant of our problem, which we call \emph{Discrete-CLEAR}, where the platform is given an integer budget $k < n$ and selects at most $k$ nodes to either completely mute or fully activate ($p_i, q_i \in \{0,1\}$) in order to minimize conflict. 
In Appendix~\ref{app:nphard}, we formally define this problem and prove via a reduction from Vertex Cover that it is NP-hard. Designing combinatorial algorithms for the Discrete-CLEAR problem is left as future work.

\section{Experiments}\label{sec:experiments}

In our experiments, we evaluate the PO-FJ model on real-world and synthetic networks to address the Research Questions (RQs): 

\begin{enumerate}[label=RQ\arabic*:]
    \item How do the posting and observing behaviors of users affect public opinion, polarization and disagreement?
    \item How effective is the CLEAR-PGD solution in the CLEAR problem compared to baselines?
    \item Which nodes does CLEAR-PGD alter and how?
    \item How does CLEAR-PGD scale?
\end{enumerate}

\spara{Datasets and Hardware:} 
Table~\ref{tab:datasets_main} summarizes the Web datasets from SNAP \cite{snapnets} we use. 
To ensure stationary FJ dynamics, we restrict all networks to their Largest Strongly Connected Component (LSCC). 
We construct the base interaction matrix $W$ using row-stochastic uniform weighting $W_{ij} = 1/d_i^{\text{out}}$, where $d_i^{\text{out}}$ is the out-degree of node $i$. 
We execute all experiments on a single NVIDIA A100 Core GPU with 40 GB of VRAM.
Source code available  online.\footnote{\url{https://anonymous.4open.science/r/PO-FJ-Opinion-Model-58E1/}}

\begin{table}[t]
    \centering
    \caption{Summary of network datasets.}
    \label{tab:datasets_main}
    \begin{tabular}{l r r l}
        \toprule
        \textbf{Dataset} & \textbf{Nodes ($n$)} & \textbf{Edges ($|E|$)} & \textbf{Type} \\
        \midrule
        Wiki & 1,300 & 39,456 & Directed \\
        Gnutella & 2,068 & 9,313 & Directed \\
        Epinions & 32,223 & 443,506 & Directed \\
        Slashdot & 71,307 & 912,381 & Directed \\
        YouTube & 1,134,890 & 5,975,248 & Undirected \\
        Pokec & 1,304,537 & 29,183,655 & Directed \\
        \bottomrule
    \end{tabular}
\end{table}

\spara{Algorithm \ref{alg:clear_pgd} (CLEAR-PGD) Parameters:} 
For Step 3, we use learning rate $\eta = \eta_0 n$, where after grid search we set $\eta_0 = 0.5$ and $n$ is the number of nodes. 
We set mixing steps $T_\mu = 40$ and $T_M = 15$, which empirically guarantee convergence to $\overline{\boldsymbol{\mu}}$ and $\overline{M}$ across all graphs. 
We apply $K$-hop structural truncation on $M^{(t+1)}$ with $K=1$, preventing $\overline{M}$ from densifying while capturing the majority of its information (See Appendix~\ref{app:khop_approx}). We set tolerance $\tau=10^{-5}$ ($T_{\text{opt}}=25$).



\subsection{RQ1: Impact of posting and observing behaviors on opinion formation metrics}

To answer RQ1, we construct a Stochastic Block Model graph with $100$ nodes, partitioned equally into a Blue community ($s_i = 0$) and a Red community ($s_i = 1$). 
Edge probabilities are set to $0.15$ for intra-community and $0.02$ for inter-community connections. 
The stubbornness of all nodes is set to $a_i = 0.5$. 

Under classic FJ ($p = q = \mathbf{1}$), this network results in: ${\mathcal{Z}}_{\text{FJ}} = 0.5159$, ${\mathcal{P}}_{\text{FJ}} = 0.1612$, ${\mathcal{D}}_{\text{FJ}} = 0.0376$. 
To show how PO-FJ affects these results, we fix the Blue community to an average posting and observing activity ($p_{\text{blue}} = 0.5, q_{\text{blue}} = 0.5$). 
We then vary the behavior of the Red community across a grid of posting ($p_{\text{red}} \in [0,1]$) and observing ($q_{\text{red}} \in [0,1]$) probabilities. 
The activity covariance matrices are set diagonal: $\Sigma^x = \operatorname{diag}(\mathbf{q} \circ (\mathbf{1}-\mathbf{q}))$ and $\Sigma^y = \operatorname{diag}(\mathbf{p} \circ (\mathbf{1}-\mathbf{p}))$.

\begin{figure*}[t]
    \centering
    \includegraphics[width=\textwidth]{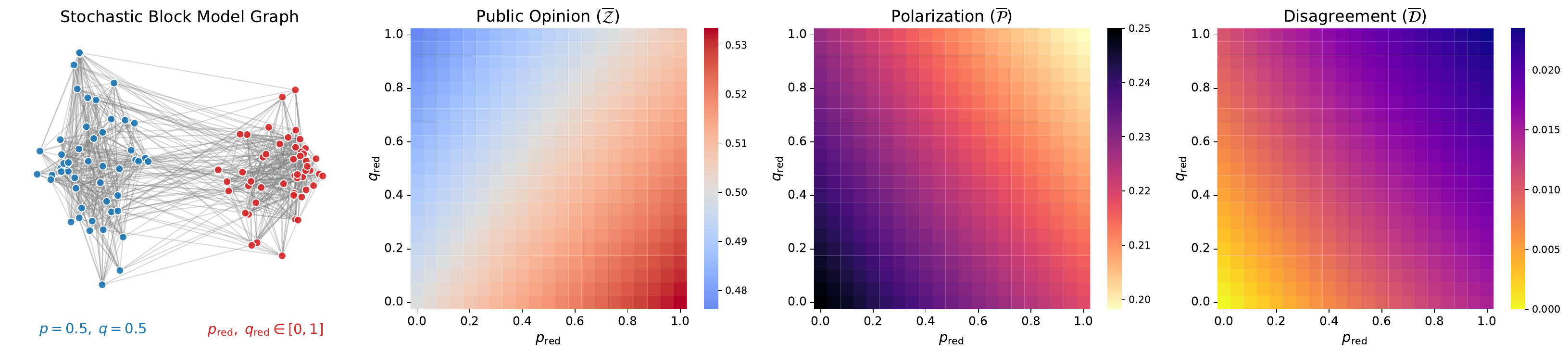}
    \caption{The effect of changing posting ($p_{\text{red}}$) and observing ($q_{\text{red}}$) behaviors on an SBM network.}
    \label{fig:heatmap_sbm}
\end{figure*}

As shown in Figure~\ref{fig:heatmap_sbm}, altering $p_{\text{red}}$ and $q_{\text{red}}$ shifts public opinion, polarization, and disagreement in ways that classic FJ fails to capture. 
When the Red community acts as ``lurkers'', observing while rarely posting (top-left of the heatmaps), it absorbs the Blue community's views without broadcasting its own. 
This pulls the public opinion toward the Blue community's opinion ($\overline{\mathcal{Z}} \approx 0.48$).
Conversely, when the Red community acts as ``broadcasters'', posting heavily while rarely observing (bottom-right of the heatmaps), they refuse to absorb the Blue community's signals, shifting public opinion toward the Red group ($\overline{\mathcal{Z}} \approx 0.53$).

These behavioral shifts also reveal a trade-off between polarization and disagreement in their respective heatmaps. 
When the Red community does not post or observe ($p_{\text{red}} =  q_{\text{red}} = 0$, bottom-left), Red nodes remain isolated in their inner beliefs, which maximizes network polarization ($\overline{\mathcal{P}} \approx 0.25$). 
However, because non-observant nodes do not engage with opposing views, disagreement is minimized in this case ($\overline{\mathcal{D}} \approx 0$). 
In contrast, as the Red community increasingly posts and observes ($p_{\text{red}} = q_{\text{red}} = 1$, top-right of the heatmaps), cross-community interaction draws the two groups closer together to reduce polarization to its minimum ($\overline{\mathcal{P}} \approx 0.2$). However, this exposes opposing opinions to each other causing disagreement to reach its maximum ($\overline{\mathcal{D}} \approx 0.02$).

The example of Figure~\ref{fig:heatmap_sbm} demonstrates that different posting and observing behaviors can dictate the final state of the network. 
This motivated us to test how intervention algorithms can incorporate the Poster-Observer dynamics when attempting to mitigate conflict.

\subsection{RQ2: How effective is our solution?}

We evaluate the effectiveness of solving CLEAR by applying Algorithm~\ref{alg:clear_pgd} to the networks of Table~\ref{tab:datasets_main}. 
To ensure our evaluation is robust to different initializations, we use two experimental setups:

\begin{itemize}[leftmargin=*]
    \item \textbf{Scenario A:} We partition the network into two distinct communities using Spectral Bisection \cite{pothen1990partitioning}. We assign inner opinions $s_i \sim \mathcal{U}(0, 0.3)$ to one half and $s_i \sim \mathcal{U}(0.7, 1.0)$ to the other.
    \item \textbf{Scenario B:} We assign inner opinions uniformly at random, $s_i \sim \mathcal{U}(0, 1)$, across the nodes of the network.
\end{itemize}

To model realistic platform activity \cite{nielsen2006participation,CarronArthur2014}, we model users of three categories. Ranking by their PageRank values ($\pi_i$), the top 1\% of nodes act as broadcasters ($p_i^{(0)} = 0.9, q_i^{(0)} = 0.1$), the next 9\% act as contributors ($p_i^{(0)} = 0.5, q_i^{(0)} = 0.5$), and the remaining 90\% act as observers ($p_i^{(0)} = 0.1, q_i^{(0)} = 0.9$). 
The activity covariance matrices are set diagonal: $\Sigma^x = \operatorname{diag}(\mathbf{q} \circ (\mathbf{1}-\mathbf{q}))$ and $\Sigma^y = \operatorname{diag}(\mathbf{p} \circ (\mathbf{1}-\mathbf{p}))$.
In both scenarios, we set all users' stubbornness at $a_i = 0.5$ and we compare our solution against the following baselines:

\begin{itemize}[leftmargin=*]
    \item \textbf{Random:} Spends the intervention budget $b$ on uniform random perturbations across the $\mathbf{p}$ and $\mathbf{q}$ vectors.
    \item \textbf{Ideological-Heuristic:} Alters users based on their ideological extremity and PageRank centrality $\pi_i$. Converts influential extremists ($\max_i \pi_i |s_i - 0.5|$) into observers ($p_i = 0, q_i = 1$) and influential moderates ($\max_i \pi_i (0.5 - |s_i - 0.5|)$) into broadcasters ($p_i = 1, q_i = 0$), sequentially until budget $b$ is exhausted.
    \item \textbf{Bridge-Heuristic:} Identifies connected pairs $(i,j) \in E$ with opposing views and maximizes their observing behaviors ($q_i = q_j = 1$). Prioritizes edges using a PageRank-weighted ideological score, $(\pi_i + \pi_j)|s_i - s_j|$, exhausting the budget $b$ iteratively.
    \item \textbf{CLEAR-PGD ($\mathbf{p}$-only):} A version of Algorithm~\ref{alg:clear_pgd} where the platform is only allowed to alter vector $\mathbf{p}$.
    \item \textbf{CLEAR-PGD ($\mathbf{q}$-only):} A version of Algorithm~\ref{alg:clear_pgd} where the platform is only allowed to alter vector $\mathbf{q}$.
    \item \textbf{CLEAR-PGD:} Algorithm~\ref{alg:clear_pgd}.
\end{itemize}

\begin{figure*}[t]
    \centering
    \vspace{-6pt}
    \includegraphics[width=0.98\textwidth]{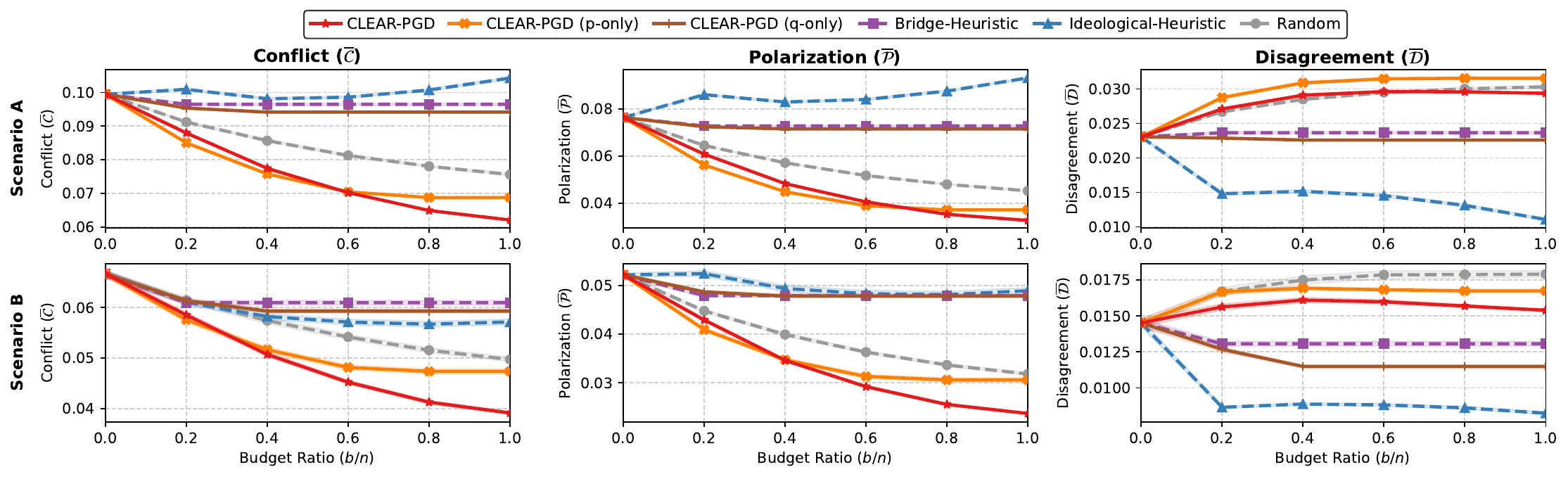}
    \vspace{-8pt}
    \caption{Performance of Algorithm \ref{alg:clear_pgd} and Baselines on Problem \ref{prob:clear} for the Epinions graph across different intervention budgets.}
    \label{fig:rq2_epinions}
    \vspace{-10pt}
\end{figure*}

Figure~\ref{fig:rq2_epinions} shows the resulting conflict ($\overline{\mathcal{C}}$), polarization ($\overline{\mathcal{P}}$), and disagreement ($\overline{\mathcal{D}}$) across different budget ratios $b/n$, reporting the mean and standard deviation over five random initializations.
In Figure~\ref{fig:rq2_epinions} we show results for the Epinions graph. 
The rest of the graphs of Table \ref{tab:datasets_main} show similar results in Table \ref{tab:appendix_rq2} of Appendix \ref{app:Experiments}.
Across all budgets and both scenarios, CLEAR-PGD consistently dominates all baseline methods in minimizing network conflict.
Comparing the two CLEAR-PGD variants ({$\mathbf{p}$-only vs. $\mathbf{q}$-only)} shows that adjustments made only on vector $\mathbf{q}$ do not perform nearly as well as adjustments made only on vector $\mathbf{p}$.  
This shows that controlling user \emph{amplification} acts as a much more powerful lever for conflict mitigation than controlling their exposure, though joint optimization achieves the best reduction.

The heuristic results show how simple intervention rules can decrease but also increase network conflict.
Figure \ref{fig:rq2_epinions} shows that the Bridge-Heuristic consistently reduces conflict across both scenarios, though it plateaus early and remains strictly suboptimal compared to CLEAR-PGD. 
Interestingly, the Ideological-Heuristic exhibits different results depending on the underlying opinion distribution.
In Scenario B, where moderate opinions are present, amplifying central moderates while muting extremists steadily reduces conflict.
However, in Scenario A (polarized communities), where no moderates exist, forcing the least extreme nodes ($s_i \approx 0.3, 0.7$) to become broadcasters does not have the desired effect. 
While this lowers disagreement, it locks the two communities into their opposing views and drives up polarization and the overall conflict.
Applying random perturbations also achieves modest conflict reduction.
By disrupting the 90-9-1 posting and observing hierarchy, where very few nodes speak, Random allows a broader set of nodes to interact. 
This reduces polarization; however, because these shifts are uncoordinated, disagreement is also increased. 

Overall, these findings reveal that simple heuristic adjustments, such as bridging opposing nodes, or changing extremists to observers and moderates to broadcasters, do reduce conflict to some extent but not as much as Algorithm \ref{alg:clear_pgd}.
Our CLEAR-PGD solution achieves the best performance, and in the RQ3 subsection, we analyze the adjustments it performs to reach this state.

\spara{Trade-off Between $\overline{\mathcal{P}}$ and $\overline{\mathcal{D}}$.}
As shown both in Figures~\ref{fig:heatmap_sbm} and~\ref{fig:rq2_epinions}, polarization $\overline{\mathcal{P}}$ has a higher numeric value than disagreement $\overline{\mathcal{D}}$. 
Because Problem~\ref{prob:clear} minimizes conflict  ($\overline{\mathcal{C}} = \overline{\mathcal{P}} + \overline{\mathcal{D}}$), Algorithm \ref{alg:clear_pgd} naturally prioritizes large reductions in polarization, accepting minor increases in disagreement as it exposes polarized
users to opposing opinions. 
As noted in Section~\ref{sec:metrics}, platforms that wish to penalize local disagreement more heavily can introduce a relative weight in Equation~\eqref{eq:conflict} that would change the focus of conflict to be more concentrated in the disagreement term $\overline{\mathcal{D}}$.
This does not alter the application or complexity of CLEAR-PGD.

\subsection{RQ3: Which nodes are altered and how?}

To understand how CLEAR-PGD achieves the best conflict reduction, we examine the shifts $\Delta p_i = p_i - p_i^{(0)}$, $\Delta q_i = q_i - q_i^{(0)}$,  and node investment $\tau_i = |\Delta p_i| + |\Delta q_i|$, against opinion extremity $e_i = |s_i - 0.5|$, graph positioning ($\pi_i$), and activity behavior of the nodes. 
The Pearson correlations and mean shifts across all datasets, for $b/n = 0.4$, are reported in Table~\ref{tab:appendix_rq3} of Appendix~\ref{app:Experiments}. 
We identify four intervention strategies applied by CLEAR-PGD:

\begin{itemize}[leftmargin=*]
    \item \textbf{Exposing Extremists:} Ideological extremity $e_i = |s_i - 0.5|$ exhibits a strong positive correlation with observing shifts $\Delta q_i$ ($r \in [+0.35, +0.51]$ in Scenario A; $r \in [+0.60, +0.72]$ in Scenario B). Algorithm~\ref{alg:clear_pgd} increases $q_i$ for extreme users, compelling extreme-minded nodes to listen to surrounding opinions.
    \item \textbf{Throttling Extremists and Amplifying Moderates:} Conversely, ideological extremity correlates negatively with posting shifts $\Delta p_i$ across every graph ($r \in [-0.24, -0.59]$ in Scenario A; $r \in [-0.22, -0.35]$ in Scenario B). This indicates that CLEAR-PGD reduces amplification for ideological extremists while boosting posting probabilities ($\Delta p_i > 0$) for moderate users.
    \item \textbf{Central Node Targeting and Activity Rebalancing:} Total node investment $\tau_i$ correlates positively with PageRank centrality $\pi_i$ ($r \in [+0.08, +0.52]$), showing that CLEAR-PGD prioritizes central hubs. The top 1\% broadcasters undergo large increases in $\mathbf{q}$, transforming central influencers into active listeners. Simultaneously, contributor (9\%) and observer (90\%) nodes receive substantial amplification boosts ($\overline{\Delta p} \in [+0.19, +0.38]$), empowering the silent majority of moderate nodes to participate.
    \item \textbf{Budget Allocation ($b$):} When jointly optimizing vectors $\mathbf{p}$ and $\mathbf{q}$, CLEAR-PGD consistently allocates $61\%\text{--}72\%$ of the total budget $b$ to modifying posting probabilities and $28\%\text{--}39\%$ to modifying observing probabilities. This confirms our RQ2 observation that amplification ($\mathbf{p}$) serves as the primary driver of conflict reduction, while exposure ($\mathbf{q}$) acts as a complementary lever.
\end{itemize}


\subsection{RQ4: How does CLEAR-PGD scale?}

Figure~\ref{fig:rq3_scalability} in Appendix \ref{app:Experiments} shows the execution time of CLEAR-PGD across networks of increasing size under Scenario B ($b/n = 0.2$), plotted against graph edges on a log-log scale.
We report the mean over five random initializations, with error bars of one standard deviation.
For smaller networks (Gnutella and Wiki), execution times plateau between $1$ and $2$ seconds since the runtime is dominated by constant-time hardware overhead.
For larger networks ranging from Epinions to Pokec, the empirical trend indicates a polynomial scaling relationship proportional to $|E|^{\gamma}$, with the estimated exponent revealing sublinear growth ($\gamma \approx0.8 < 1$).



\section{Conclusion}
\label{sec:conclusion}

In this paper, we introduced the PO-FJ model, extending classic opinion dynamics to capture diverse user activity behaviors on social media. 
We showed that under PO-FJ dynamics, opinions are driven into perpetual oscillations, generalizing established opinion-formation metrics. 
Using the new levers of PO-FJ, we formulated the CLEAR problem for social-network conflict mitigation and developed CLEAR-PGD, a scalable algorithm that effectively reduces conflict in large-scale networks. 
These results open new research directions for modeling complex web interactions. 
Future work can investigate coordinated group behavior via the activity covariances $\Sigma^x$ and $\Sigma^y$, relax the independent-activity assumption, and explore alternative objectives, such as maximizing network fairness.




\bibliographystyle{ACM-Reference-Format}
\bibliography{robustness-bibliography}


\appendix

\section{Proof of Theorem \ref{thm:no_static_consensus}}
\label{app:no_static_consensus}

We prove the impossibility of converging to a static equilibrium and the convergence in distribution to a unique, non-degenerate (i.e., not concentrated at a single point) stationary probability distribution.

\spara{Part 1: Impossibility of Static Equilibrium.} 
Assume by contradiction that $\hat{\mathbf{z}}^{(t)}$ converges almost surely to a static equilibrium vector $\mathbf{z}^*$ as $t \to \infty$, or that the stationary distribution is a degenerate point mass concentrated at $\mathbf{z}^*$. 
By continuity of the affine update in Equation~\eqref{eq:PO_FJ_update}, the fixed-point equation $\mathbf{z}^* = A \mathbf{s} + (I - A) \hat{W}^{(t)} \mathbf{z}^*$ must hold with probability $1$ for every realization $\hat{W}^{(t)}$ in the support of the activity distribution, and taking the expectation yields $\mathbf{z}^* = A \mathbf{s} + (I - A) \overline{W} \mathbf{z}^*$.

Consider any node $i \in \{1, \dots, n\}$. Because $q_i = \mathbb{E}[x_i^{(t)}] \in (0,1)$, the marginal probability that node $i$ does not observe at iteration $t$ is strictly positive: $\Pr(x_i^{(t)} = 0) = 1 - q_i > 0$, which holds under Assumption~\ref{ass:activity} for any valid covariance matrices $\Sigma^x, \Sigma^y$. 
By Equation~\eqref{eq:Dynamic Interaction Matrix}, whenever $x_i^{(t)} = 0$, the $i$-th row of $\hat{W}^{(t)}$ collapses to the standard basis vector $\mathbf{e}_i^T$ ($\hat{W}_{ii}^{(t)} = 1$ and $\hat{W}_{ij}^{(t)} = 0$ for all $j \neq i$). 
Evaluating the $i$-th coordinate of $\mathbf{z}^* = A \mathbf{s} + (I - A) \hat{W}^{(t)} \mathbf{z}^*$ on the positive-probability event $\{x_i^{(t)} = 0\}$ yields:
\begin{equation}
    z_i^* = a_i s_i + (1 - a_i) z_i^* \implies a_i (z_i^* - s_i) = 0.
\end{equation}
Since $a_i \in (0,1)$ for all nodes, this forces $z_i^* = s_i$ for every $i \in \{1, \dots, n\}$, and hence $\mathbf{z}^* = \mathbf{s}$.

Substituting $\mathbf{z}^* = \mathbf{s}$ into the expected equilibrium relation $\mathbf{z}^* = A \mathbf{s} + (I - A) \overline{W} \mathbf{z}^*$ gives $\mathbf{s} = A \mathbf{s} + (I - A) \overline{W} \mathbf{s}$, which rearranges to $(I - A)(I - \overline{W})\mathbf{s} = \mathbf{0}$. 
Because $(I - A)$ is diagonal with strictly positive entries $1 - a_i > 0$, it is invertible, implying $\overline{W}\mathbf{s} = \mathbf{s}$. 
Since $W$ is irreducible and $p_i, q_i \in (0,1)$, the expected interaction matrix $\overline{W}$ is irreducible and row-stochastic (Definition~\ref{def:expected_W}). By the Perron--Frobenius theorem, the eigenspace of $\overline{W}$ associated with eigenvalue $1$ is one-dimensional and spanned by $\mathbf{1}$, forcing $\mathbf{s} = c\mathbf{1}$ for some constant $c \in \mathbb{R}$. 
This contradicts the assumption that inner opinions are not identical ($s_i \neq s_j$ for some $i, j$), proving that $\{\hat{\mathbf{z}}^{(t)}\}_{t\ge 0}$ cannot converge to a static point and that no stationary distribution can be concentrated at a single point.

\spara{Part 2: Convergence to a Unique Stationary Distribution.}
While the network does not reach a static consensus, the opinions do not diverge. The opinion update rule is of the form $\hat{\mathbf{z}}^{(t+1)} = M_t \hat{\mathbf{z}}^{(t)} + \mathbf{b}$,
where $M_t = (I - A) \hat{W}^{(t)}$ is a random $n \times n$ matrix and $\mathbf{b} = A \mathbf{s}$ is a constant $n \times 1$ vector. By Assumption~\ref{ass:activity}, matrices $M_t$ are independent and identically distributed over time.
To analyze stability, we evaluate how matrix $M_t$ scales opinions under the infinity norm. Every realization of the dynamic interaction matrix $\hat{W}^{(t)}$ is row-stochastic, meaning $\|\hat{W}^{(t)}\|_\infty = 1$. 
Because all agents have stubbornness $a_i \in (0,1)$, the diagonal matrix $(I-A)$ has entries $(1-a_i) \in (0, 1)$. Therefore, for every realization at iteration $t$, the norm of $M_t$ is strictly bounded by a constant $\gamma$:
\begin{equation*}
    \| M_t \|_\infty = \| (I - A) \hat{W}^{(t)} \|_\infty \le \max_i (1 - a_i) \eqqcolon \gamma < 1.
\end{equation*}
Inequality $\gamma < 1$ establishes that every realization $M_t$ is a strict contraction mapping.
By the foundational theorem of Iterated Random Functions (Theorem 2.1 of~\cite{Diaconis}), the affine recursion admits a unique invariant probability distribution if the expected logarithmic Lipschitz norm is strictly negative.
Because $\|M_t\|_\infty \le \gamma < 1$ for all $t$, we have $\frac{1}{t}\mathbb{E}[\log \|M_1 \dots M_t\|_\infty] \le \log(\gamma) < 0$, satisfying this condition.
Thus, $\{\hat{\mathbf{z}}^{(t)}\}_{t\ge0}$ converges in distribution to a unique stationary probability distribution, which by Part 1 is non-degenerate with strictly positive variance. \qed

\section{Proof of Lemma \ref{lm:expected_opinions}}
\label{app:Lemma1}

To derive the evolution of $\boldsymbol{\mu}^{(t)} = \mathbb{E}[\mathbf{\hat{z}}^{(t)}]$, we apply the expectation operator to both sides of Eq. \eqref{eq:PO_FJ_update}. 
Because $A$ and $\mathbf{s}$ are deterministic constants, this yields: $\boldsymbol{\mu}^{(t+1)} = A \mathbf{s} + (I - A) \mathbb{E}\big[\hat{W}^{(t)} \mathbf{\hat{z}}^{(t)}\big]$.
Under the PO-FJ Assumption \ref{ass:activity}, interaction matrix $\hat{W}^{(t)}$ is independent of $\mathbf{\hat{z}}^{(t)}$, since it relies on the interactions of up to time $t-1$. 
This independence allows us to decompose the expectation: $\mathbb{E}\big[\hat{W}^{(t)} \mathbf{\hat{z}}^{(t)}\big] = \mathbb{E}\big[\hat{W}^{(t)}\big] \mathbb{E}\big[\mathbf{\hat{z}}^{(t)}\big] = \overline{W} \boldsymbol{\mu}^{(t)}$.
Substituting this yields Eq. \eqref{eq:meanrec}.
Notice that this is the exact form of classic FJ dynamics. 
Because $\overline{W}$ retains the row-stochasticity and connectivity of $W$, this sequence inherits standard FJ convergence. Thus, this sequence converges to the unique fixed point of Eq. \eqref{eq:meanss}.

\section{Proof of Lemma \ref{lm:covariance}}
\label{app:Lemma2}

To derive the exact update equation for the covariance matrix $M^{(t)}$, we analyze the evolution of the opinion \emph{error vector}, defined as the deviation of the network from its expected state: $\mathbf{e}^{(t)} = \mathbf{\hat{z}}^{(t)} - \boldsymbol{\mu}^{(t)}$. 
By definition, $\mathbb{E}[\mathbf{e}^{(t)}] = \mathbf{0}$, and the covariance matrix is: $M^{(t)} = \mathbb{E}\big[\mathbf{e}^{(t)} (\mathbf{e}^{(t)})^T\big]$.
Subtracting the expected update $\boldsymbol{\mu}^{(t)}$ (Lemma \ref{lm:expected_opinions}) from Equation \eqref{eq:PO_FJ_update}, the stubbornness term $A\mathbf{s}$ perfectly cancels out:
\begin{equation} \label{eq:error_step1}
    \mathbf{e}^{(t+1)} = (I-A) \big[ \hat{W}^{(t)} \mathbf{\hat{z}}^{(t)} - \overline{W}\boldsymbol{\mu}^{(t)} \big].
\end{equation}
We apply centering by adding and subtracting $\hat{W}^{(t)}\boldsymbol{\mu}^{(t)}$ inside the bracket, which allows us to isolate the error vector $\mathbf{e}^{(t)}$:
\begin{align}
    \mathbf{e}^{(t+1)} 
    &= (I-A) \big[ \hat{W}^{(t)}\mathbf{e}^{(t)} + \Delta W^{(t)} \boldsymbol{\mu}^{(t)} \big], \label{eq:error_step2}
\end{align}
where $\Delta W^{(t)} = \hat{W}^{(t)} - \overline{W}$.
To find the new covariance matrix $M^{(t+1)}$, we compute the expectation of the outer product $\mathbb{E}\big[\mathbf{e}^{(t+1)} (\mathbf{e}^{(t+1)})^T\big]$. Expanding Equation \eqref{eq:error_step2} yields two quadratic terms and a cross-term. 
Crucially, because the network sampling at time $t$ is independent of the network's history, the matrices $\hat{W}^{(t)}$ and $\Delta W^{(t)}$ are strictly independent of the current error $\mathbf{e}^{(t)}$. 
Because $\mathbb{E}[\mathbf{e}^{(t)}] = \mathbf{0}$, it holds that: $\mathbb{E}\big[ \hat{W}^{(t)}\mathbf{e}^{(t)} (\boldsymbol{\mu}^{(t)})^T (\Delta W^{(t)})^T \big]= \mathbf{0}$.
Therefore, the covariance update decouples into two components:
\begin{align} 
    M^{(t+1)} &= (I-A) \mathbb{E}\big[ \hat{W}^{(t)} M^{(t)} (\hat{W}^{(t)})^T \big] (I-A) \nonumber \\
    &\quad + (I-A) \mathbb{E}\big[ \Delta W^{(t)} \boldsymbol{\mu}^{(t)} (\boldsymbol{\mu}^{(t)})^T (\Delta W^{(t)})^T \big] (I-A) \nonumber \\
    &= F(M^{(t)}) + \Omega^{(t)}. \label{eq:cov_decoupled}
\end{align}

\spara{Deriving $F(M^{(t)})$:} Applying the variance of a product to the independent random activity vectors $\mathbf{x}^{(t)}$ and $\mathbf{y}^{(t)}$, the expectation expands into the mean propagation $\overline{W} M^{(t)} \overline{W}^T$ plus the variance operator $\mathcal{H}(M^{(t)})$. 
The function $\mathcal{G}$ emerges from computing the Hadamard products ($\circ$) of the covariance matrices ($\Sigma^x, \Sigma^y$) across the base interaction matrix $W$.

\spara{Deriving $\Omega^{(t)}$:} Expanding the expectation of the zero-mean noise $\Delta W^{(t)}$ over $\boldsymbol{\mu}^{(t)}$ yields operator: $B_{\boldsymbol{\mu}^{(t)}} = W\text{diag}(\boldsymbol{\mu}^{(t)}) - \text{diag}(\boldsymbol{\mu}^{(t)})W$. 
Scaling this operator by the activity covariance matrices $\Sigma^x$ and $\Sigma^y$ directly produces the closed-form Hadamard formulation of the variance injection $\Omega^{(t)}$ defined in Lemma \ref{lm:covariance}. \qed

\section{The Discrete-CLEAR Problem}
\label{app:nphard}

In Section~\ref{sec:mitigation}, we introduced the CLEAR problem (Problem~\ref{prob:clear}), where platform parameter control is $\mathbf{p}, \mathbf{q} \in [0,1]^n$. 
Here, we define its combinatorial counterpart, where a platform completely mutes or activates a subset of at most $k$ users, and prove that it is NP-hard.

\begin{problem}[Discrete-CLEAR]\label{prob:discrete_clear}
Given a PO-FJ network with initial probability vectors $\mathbf{p}^{(0)}, \mathbf{q}^{(0)} \in [0,1]^n$ and an integer intervention budget $k < n$, select a subset of at most $k$ nodes and assign for each selected node $p_i, q_i \in \{0,1\}$ such that $\overline{\mathcal{C}}(\mathbf{p}, \mathbf{q})$ is minimized.
\end{problem}

\begin{proof}
To prove that the Discrete-CLEAR problem is NP-hard, we construct a poly-time reduction from the Vertex Cover problem. 

\spara{Decision Discrete-CLEAR Problem:} Given a PO-FJ network, an intervention budget $k$, and a target conflict threshold $C^*$, does there exist a set of at most $k$ node interventions ($p_i, q_i \in \{0,1\}$) such that the total network conflict satisfies $\overline{\mathcal{C}}(\mathbf{p}, \mathbf{q}) \le C^*$?

\spara{Decision Vertex Cover Problem:} Given an undirected graph $G_v = (V_v, E_v)$ and an integer $k$, does there exist a subset of vertices $S \subseteq V_v$ with $|S| \le k$ such that every edge in $E_v$ has at least one endpoint in $S$?

\spara{Construction.} Let $G_v$ be an arbitrary instance of the Vertex Cover problem with $n$ nodes.
Assume w.l.o.g.\ that $G_v$ has no isolated vertices.
We construct an instance of the PO-FJ network in polynomial time as follows: 
Let the nodes be $V = V_v$ and the edges be $E = E_v$. 
Set the elements of the base interaction matrix $W$ as $W_{ij} = 1/d_i$ for all $(i,j) \in E$, where $d_i$ is the degree of node $i$. 
Assign the inner opinions as $s_i = i/n$ for all $i \in V$.
Set the initial posting and observing probabilities as $p^{(0)}_i = q^{(0)}_i = 1/2$.
Set the stubbornness for all nodes as $a_i = 1 - \epsilon$, with $\Sigma^x, \Sigma^y$ set to any valid covariance matrices (e.g., diagonal). We set:
\[
  \epsilon=\frac{1}{100\,n^{4}},\qquad
  C^*=\mathcal{P}({\mathbf{s}})=\frac{1}{n}\sum_{i=1}^n (s_i-\bar{s})^2,\quad \bar{s}=\frac{1}{n}\sum_{j=1}^n s_j .
\]

Observe the following two facts for the above assignment: 
\emph{(F1)} $\overline{W}$ is row-stochastic, so
$\overline{\boldsymbol{\mu}}-\mathbf{s}=(1-\epsilon)\sum_{t\ge1}\epsilon^{t}(\overline{W}^{t}\mathbf{s}-\mathbf{s})$, and thus
$\|\overline{\boldsymbol{\mu}}-\mathbf{s}\|_\infty\le\epsilon$.
\emph{(F2)} The $\overline{M}$-terms for Polarization and Disagreement in Equations~\eqref{eq:pol_POFJ_stas} and~\eqref{eq:dis_POFJ_stas} are nonnegative. Using $\mathbf{e}=\mathbf{\hat{z}}^{(t)}-\overline{\boldsymbol{\mu}}$ and $\bar{e}=\frac{1}{n}\sum_i e_i$, these terms can be written at stationarity as:
\begin{align*}
 \frac{1}{n}\mathrm{Tr}\!\big(\overline{M}(I-\tfrac{1}{n}\mathbf{1}\mathbf{1}^{\top})\big) 
 &=\mathbb{E}\Big[\frac{1}{n}\sum_{i=1}^n (e_i-\bar{e})^2\Big]\ge 0,\\
 \frac{1}{2n}\sum_{i,j}\overline{W}_{ij}(\overline{M}_{ii}+\overline{M}_{jj}-2\overline{M}_{ij}) 
 &=\frac{1}{2n}\sum_{i,j}\overline{W}_{ij}\,\mathrm{Var}(\hat{z}_i^{(t)}-\hat{z}_j^{(t)})\ge 0.
\end{align*}

\spara{Forward ($\Rightarrow$).} Let $S$ be a vertex cover with $|S|\le k$, and set $(p_i,q_i)=(0,0)$ for all $i\in S$.
Every edge has at least one endpoint in $S$, so $\widehat{W}^{(t)}_{ij}=x_i^{(t)}y_j^{(t)}W_{ij}=0$ for all $i\ne j$.
From the PO-FJ dynamics, this results in $\widehat{W}^{(t)}=I, \forall t$. 
Consequently, $\mathbf{\hat{z}}^{(t)}\to\mathbf{s}$ deterministically, yielding $\overline{\boldsymbol{\mu}}=\mathbf{s}$,
$\overline{M}=0$, $\overline{\mathcal{D}}=0$, and $\overline{\mathcal{C}}=\mathcal{P}({\mathbf{s}})=C^*$.

\spara{Reverse ($\Leftarrow$).} Let $T'$ be the set of intervened nodes with $|T'|\le k$, and suppose
$T'$ is not a vertex cover. Then some edge $\{u,v\} \in E_v$ has $u,v\notin T'$, so
$\overline{W}_{uv}=q_u p_v W_{uv}=W_{uv}/4\ge 1/(4n)$, while $|s_u-s_v|\ge 1/n$.
By (F1) and $\epsilon\le 1/(4n)$, we have $|\overline{\mu}_u-\overline{\mu}_v|\ge 1/(2n)$. Dropping the nonnegative
$\overline{M}$-terms (F2) and all other nonnegative summands, we obtain:
\[
 \overline{\mathcal{D}}\ \ge\ \frac{1}{2n} \overline{W}_{uv} (\overline{\mu}_u - \overline{\mu}_v)^2 \ge\ \frac{1}{2n}\cdot\frac{1}{4n}\cdot\frac{1}{4n^{2}}=\frac{1}{32n^{4}} .
\]
Writing $\overline{\boldsymbol{\mu}}=\mathbf{s}+\boldsymbol{\delta}$ with $\|\boldsymbol{\delta}\|_\infty\le\epsilon$, and noting that $|s_i-\bar{s}| < 1$ and $\sum_i (s_i - \bar{s}) = 0$, we expand the polarization:
\[
\overline{\mathcal{P}} \geq \mathcal{P}(\overline{\boldsymbol{\mu}})=\mathcal{P}(\mathbf{s})+\frac{2}{n}\sum_{i=1}^n(s_i-\bar{s})\delta_i+\mathcal{P}(\boldsymbol{\delta})\ge\mathcal{P}(\mathbf{s})-2\epsilon.
\]
Therefore, total conflict is strictly bounded by: 
\[
 \overline{\mathcal{C}} = \overline{\mathcal{P}} + \overline{\mathcal{D}} \ge\ \mathcal{P}(\mathbf{s})-2\epsilon+\frac{1}{32n^{4}}>\mathcal{P}(\mathbf{s})=C^*,
\]
(since $\epsilon < 1/(64n^4)$), which is a contradiction. Hence, $T'$ must be a valid vertex cover of size at most $k$.
\end{proof}

\section{Updated Metrics \eqref{eq:po_POFJ_stas} \eqref{eq:pol_POFJ_stas} \eqref{eq:dis_POFJ_stas}}
\label{app:PolDis}

To derive the stationary expected public opinion, polarization, and disagreement, we take the limit and expectation of the Equations \eqref{eq:po_POFJ}, \eqref{eq:pol_POFJ}, \eqref{eq:dis_POFJ}: 

\spara{Stationary Expected Public Opinion $\overline{\mathcal{Z}}$:}
By definition, the public opinion at iteration $t$ is $\mathcal{Z}^{(t)} = \frac{1}{n} \sum_{i=1}^n \hat{z}_i^{(t)}$. Taking the expectation and applying linearity of expectation yields:
$$
\mathbb{E}\left[\mathcal{Z}^{(t)}\right] = \frac{1}{n} \sum_{i=1}^n \mathbb{E}\left[\hat{z}_i^{(t)}\right] = \frac{1}{n} \sum_{i=1}^n \mu_i^{(t)}.
$$
Taking the limit as $t \to \infty$ and substituting Lemma~\ref{lm:expected_opinions} ($\lim_{t \to \infty} \boldsymbol{\mu}^{(t)} = \overline{\boldsymbol{\mu}}$), we obtain the stationary expected public opinion of Eq. \eqref{eq:po_POFJ_stas}.

\spara{Stationary Expected Polarization $\overline{\mathcal{P}}$:}
We define the symmetric matrix $C = I - \frac{1}{n}\mathbf{1}\mathbf{1}^\top$. This allows us to write the PO-FJ iteration-based polarization as $\mathcal{P}^{(t)} = \frac{1}{n} (\mathbf{\hat{z}}^{(t)})^\top C \mathbf{\hat{z}}^{(t)}$. By applying the identity $\mathbb{E}[\mathbf{x}^\top B \mathbf{x}] = \mathbb{E}[\mathbf{x}]^\top B \mathbb{E}[\mathbf{x}] + \text{Tr}(B \,\text{Cov}(\mathbf{x}))$ and limit, we obtain:
$$
\overline{\mathcal{P}} = \lim_{t \to \infty} \mathbb{E}\left[\mathcal{P}^{(t)}\right] = \frac{1}{n} \overline{\boldsymbol{\mu}}^\top C \overline{\boldsymbol{\mu}} + \frac{1}{n} \text{Tr}\left(C \overline{M}\right).
$$
We expand this expression into its scalar components. The first term yields the variance of the stationary expected opinions $\overline{\boldsymbol{\mu}}$. The trace term expands to $\frac{1}{n}\text{Tr}(\overline{M}) - \frac{1}{n^2}\mathbf{1}^\top \overline{M} \mathbf{1}$. Combining these expansions yields the final closed-form expression of Eq. \eqref{eq:pol_POFJ_stas}.

\spara{Stationary Expected Disagreement $\overline{\mathcal{D}}$:}
The disagreement metric relies on the dynamic interaction matrix $\hat{W}^{(t)}$. Crucially, while $\mathbf{\hat{z}}^{(t)}$ is determined by the history of interactions up to $t-1$, $\hat{W}^{(t)}$ depends solely on the random posting and observing activity at iteration $t$. Because they are strictly independent, we can write:
\begin{align*}
\mathbb{E}\left[ \mathcal{D}^{(t)} \right] & = \frac{1}{2n} \sum_{i=1}^n \sum_{j=1}^n \mathbb{E}\left[\hat{W}_{ij}^{(t)}\right] \mathbb{E}\left[\left(\hat{z}_i^{(t)} - \hat{z}_j^{(t)}\right)^2\right].
\end{align*}
This maps the disagreement directly to the expected interaction matrix $\overline{W}$. To evaluate this, we define $L_{\overline{W}} = \frac{1}{2}(I + D_{\text{in}}(\overline{W}) - \overline{W} - \overline{W}^\top)$, where $D_{\text{in}}$ is the diagonal in-degree matrix, noting that $D_{\text{out}}(\overline{W}) = I$ due to row-stochasticity. 
Applying these results yields:
$$
\overline{\mathcal{D}} = \lim_{t \to \infty} \mathbb{E}\left[ \frac{1}{n} (\mathbf{\hat{z}}^{(t)})^\top L_{\overline{W}} \mathbf{\hat{z}}^{(t)} \right] = \frac{1}{n} \overline{\boldsymbol{\mu}}^\top L_{\overline{W}} \overline{\boldsymbol{\mu}} + \frac{1}{n} \text{Tr}\left(L_{\overline{W}} \overline{M}\right).
$$
The first term represents the static expected disagreement. Similarly to before, the trace term expands and also incorporates $\overline{W}_{ij}$ and $(1 + \overline{d}_i^{\text{in}})$, due to $L_{\overline{W}}$, where $\overline{d}_i^{\text{in}} = \sum_k \overline{W}_{ki}$, resulting in Eq. \eqref{eq:dis_POFJ_stas}.

\section{$K$-hop Matrix $\overline{M}$ Approximation}
\label{app:khop_approx}

To evaluate $\overline{M}$ by iterating Equation~\eqref{eq:covrec}, the resulting matrix is \emph{dense}: on a strongly connected graph, every pair of nodes eventually acquires a nonzero covariance ($\overline{M}_{ij} \neq 0$). 
Storing and updating a dense matrix is computationally prohibitive (e.g., $n = 10^6$ requires $8$~TB of memory). 
However, in sparse social networks, the off-diagonal entries $\overline{M}_{ij}$ decay geometrically because distant nodes exert weak effects on one another. 
Consequently, $M^{(t)}$ and $\overline{M}$ can be accurately approximated via a $K$-hop structural truncation that zeros out covariance entries between nodes that are beyond $K$ hops apart in each iteration of~\eqref{eq:covrec}. 
For $K=1$ and for diagonal $\Sigma^x, \Sigma^y$ using sparse matrix-vector products and edge-indexed operations results in $\mathcal{O}(|E|)$ complexity per iteration of Equation~\eqref{eq:covrec}.

\spara{Empirical Validation.}
To quantify the accuracy of the structural truncation, we evaluate both how much matrix information the sparse approximation retains and how closely it preserves the true network conflict $\overline{\mathcal{C}}$. 
A standard metric for retained matrix information~\cite{golub2013matrix} is the \emph{Retained Energy Ratio} ($\rho_K$), calculated via the Frobenius norm, alongside the \emph{Relative Conflict Error} ($\epsilon_{\mathcal{C}}$):
\begin{equation}
    \rho_K = \frac{\|\overline{M}^{(K)}\|_F^2}{\|\overline{M}^{\text{dense}}\|_F^2}, \qquad \epsilon_{\mathcal{C}} = \frac{|\overline{\mathcal{C}}^{(K)} - \overline{\mathcal{C}}^{\text{dense}}|}{\overline{\mathcal{C}}^{\text{dense}}},
\end{equation}
where $\overline{M}^{\text{dense}}$ and $\overline{\mathcal{C}}^{\text{dense}}$ denote the exact dense stationary opinion covariance matrix calculated from Eq.~\eqref{eq:covrec} and the resulting conflict, while $\overline{M}^{(K)}$ and $\overline{\mathcal{C}}^{(K)}$ denote the iteratively truncated $K$-hop approximation counterparts.
We evaluate these metrics for $K \in \{1, 2, 3\}$ on the Wiki dataset and a synthetic Stochastic Block Model (SBM) graph with $n = 2{,}000$ nodes partitioned into two equal-sized communities with edge probabilities set to $0.06$ for intra-community and $0.005$ for inter-community connections.

\begin{table}[h!]
    \centering
    \caption{Analysis of the $K$-hop iterative approximation results: sparsity, Frobenius energy retention $\rho_K$, and relative conflict error $\epsilon_{\mathcal{C}}$.}
    \label{tab:k_hop_analysis}
    \resizebox{\columnwidth}{!}{%
    \begin{tabular}{l c c c c c c}
        \toprule
        \multirow{2}{*}{\textbf{Approximation}} & \multicolumn{3}{c}{\textbf{Wiki}} & \multicolumn{3}{c}{\textbf{SBM}} \\
        \cmidrule(lr){2-4} \cmidrule(lr){5-7}
        & \textbf{Sparsity ($|E_K| / n^2$)} & \textbf{$\rho_K$} & \textbf{$\epsilon_{\mathcal{C}}$} & \textbf{Sparsity ($|E_K| / n^2$)} & \textbf{$\rho_K$} & \textbf{$\epsilon_{\mathcal{C}}$} \\
        \midrule
        $K=1$ (1-Hop) & 2.41\% & 0.9061 & 0.0023\% & 3.30\% & 0.9970 & 0.0001\% \\
        $K=2$ (2-Hop) & 33.02\% & 0.9329 & 0.0012\% & 71.38\% & 0.9994 & 0.0000\% \\
        $K=3$ (3-Hop) & 80.52\% & 0.9702 & 0.0008\% & 100.00\% & 1.0000 & 0.0000\% \\
        Dense (Exact) & 100.00\% & 1.0000 & 0.0000\% & 100.00\% & 1.0000 & 0.0000\% \\
        \bottomrule
    \end{tabular}%
    }
\end{table}

As shown in Table~\ref{tab:k_hop_analysis}, the $1$-hop iterative approximation ($K=1$) captures over $90.6\%$ of the Frobenius energy on Wiki and $99.7\%$ on SBM while yielding virtually zero error ($\epsilon_{\mathcal{C}} < 0.01\%$) in the conflict objective. 
While $K=2$ and $K=3$ provide marginal energy gains, they do not justify the steep increase in computational and memory cost (density jumping from $2.41\%$ to $33.02\%$ on Wiki and from $3.30\%$ to $71.38\%$ on SBM). 
Therefore, applying the structural approximation with $K=1$ anchors the per-iteration complexity to $\mathcal{O}(|E|)$, enabling scaling to millions of nodes while preserving most of the retained energy of $\overline{M}$ and keeping the conflict objective $\overline{\mathcal{C}}$ very close to the actual value.

\section{Extended Experiment Results}
\label{app:Experiments}

Table~\ref{tab:appendix_rq2} provides the results supporting RQ2, showing the conflict across the remaining networks of Table~\ref{tab:datasets_main} at budgets ($b/n \in \{0,0.4, 1.0\}$).
Table~\ref{tab:appendix_rq3} provides the results supporting RQ3 across all datasets and scenarios for $b/n = 0.4$. 
We report the Pearson correlations ($r$) of the shifts $(\Delta p_i, \Delta q_i)$ with $e_i = |s_i - 0.5|$, $\tau_i = |\Delta p_i| + |\Delta q_i|$ and PageRank $\pi_i$. Also we show mean shifts across initial activity levels, and the budget allocation split between $\mathbf{p}$ and $\mathbf{q}$.

\begin{figure}[b]
    \centering
    \vspace{-6pt}
    \includegraphics[width=0.85\columnwidth]{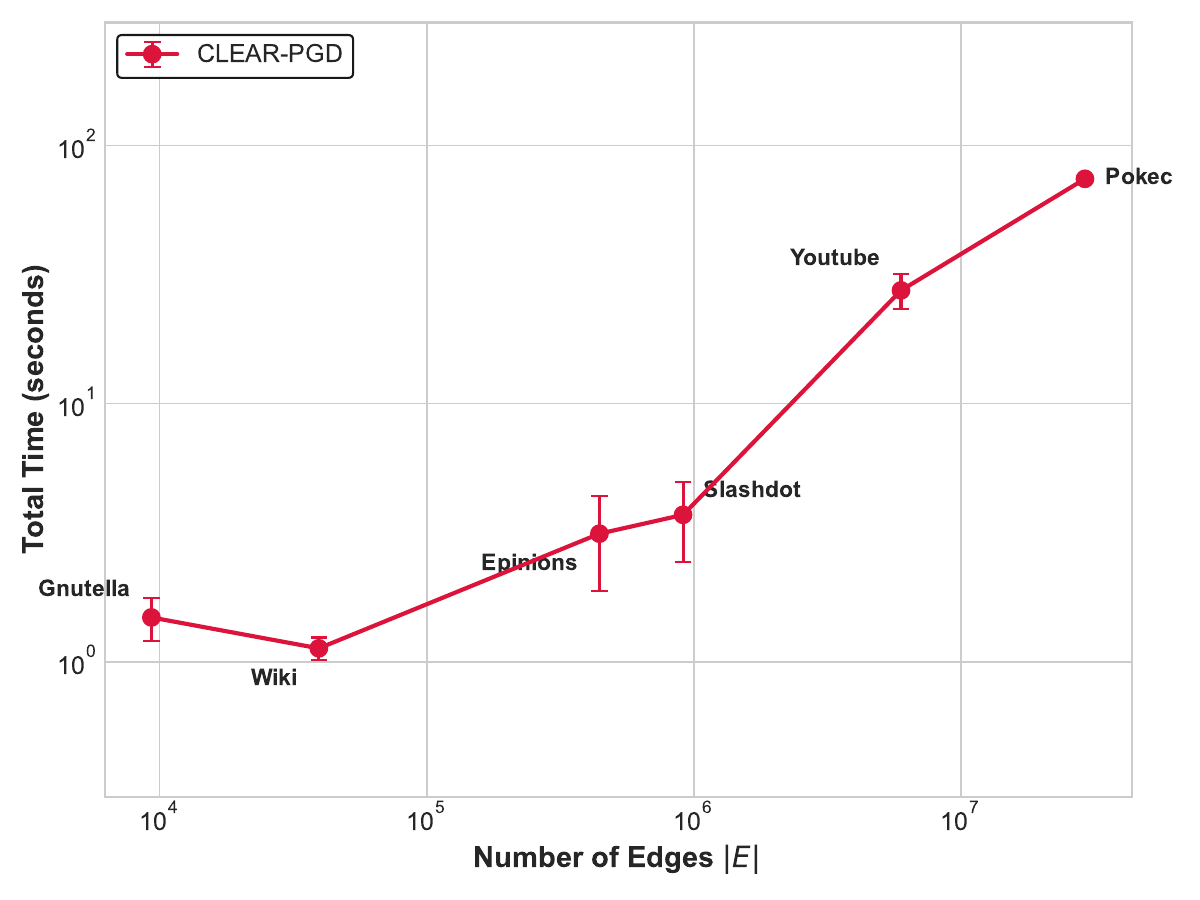}
    \vspace{-6pt}
    \caption{Total time required to solve the CLEAR Problem.}
    \label{fig:rq3_scalability}
    \vspace{-6pt}
\end{figure}

\begin{table}[h!]
\centering
\caption{Conflict results ($\overline{\mathcal{C}}$) for additional networks.}
\label{tab:appendix_rq2}
\vspace{-6pt}
\resizebox{\columnwidth}{!}{%
\begin{tabular}{llccccccc}
\toprule
\textbf{Dataset} & \textbf{Method} & \multicolumn{3}{c}{\textbf{Scenario A ($\overline{\mathcal{C}}$)}} & & \multicolumn{3}{c}{\textbf{Scenario B ($\overline{\mathcal{C}}$)}} \\
\cmidrule{3-5} \cmidrule{7-9}
 & & $b/n=0$ & $b/n=0.4$ & $b/n=1.0$ & & $b/n=0$ & $b/n=0.4$ & $b/n=1.0$ \\
\midrule
\multirow{6}{*}{\textbf{Wiki}} 
 & Random                 & 0.1227 & 0.1194 & 0.1082 & & 0.0695 & 0.0631 & 0.0501 \\
 & Ideological-Heuristic  & 0.1227 & 0.1136 & 0.1096 & & 0.0695 & 0.0593 & 0.0540 \\
 & Bridge-Heuristic       & 0.1227 & 0.1211 & 0.1200 & & 0.0695 & 0.0677 & 0.0666 \\
 & CLEAR-PGD ($q$-only)   & 0.1227 & 0.1196 & 0.1177 & & 0.0695 & 0.0667 & 0.0651 \\
 & CLEAR-PGD ($p$-only)   & 0.1227 & 0.1105 & 0.0996 & & 0.0695 & 0.0566 & 0.0449 \\
 & CLEAR-PGD              & 0.1227 & \textbf{0.1069} & \textbf{0.0922} & & 0.0695 & \textbf{0.0550} & \textbf{0.0401} \\
\midrule
\multirow{6}{*}{\textbf{Gnutella}} 
 & Random                 & 0.1151 & 0.1079 & 0.0917 & & 0.0703 & 0.0638 & 0.0511 \\
 & Ideological-Heuristic  & 0.1151 & 0.1074 & 0.1077 & & 0.0703 & 0.0592 & 0.0564 \\
 & Bridge-Heuristic       & 0.1151 & 0.1120 & 0.1102 & & 0.0703 & 0.0677 & 0.0663 \\
 & CLEAR-PGD ($q$-only)   & 0.1151 & 0.1105 & 0.1091 & & 0.0703 & 0.0660 & 0.0646 \\
 & CLEAR-PGD ($p$-only)   & 0.1151 & 0.0968 & 0.0846 & & 0.0703 & 0.0561 & 0.0452 \\
 & CLEAR-PGD              & 0.1151 & \textbf{0.0934} & \textbf{0.0756} & & 0.0703 & \textbf{0.0536} & \textbf{0.0387} \\
\midrule
\multirow{6}{*}{\textbf{Slashdot}} 
 & Random                 & 0.1011 & 0.0964 & 0.0842 & & 0.0687 & 0.0650 & 0.0548 \\
 & Ideological-Heuristic  & 0.1011 & 0.1021 & 0.1070 & & 0.0687 & 0.0654 & 0.0619 \\
 & Bridge-Heuristic       & 0.1011 & 0.0980 & 0.0973 & & 0.0687 & 0.0644 & 0.0637 \\
 & CLEAR-PGD ($q$-only)   & 0.1011 & 0.0963 & 0.0958 & & 0.0687 & 0.0626 & 0.0620 \\
 & CLEAR-PGD ($p$-only)   & 0.1011 & 0.0897 & 0.0821 & & 0.0687 & 0.0595 & 0.0540 \\
 & CLEAR-PGD              & 0.1011 & \textbf{0.0842} & \textbf{0.0748} & & 0.0687 & \textbf{0.0551} & \textbf{0.0458} \\
\midrule
\multirow{6}{*}{\textbf{YouTube}} 
 & Random                 & 0.1084 & 0.1023 & 0.0912 & & 0.0655 & 0.0617 & 0.0511 \\
 & Ideological-Heuristic  & 0.1084 & 0.1063 & 0.1071 & & 0.0655 & 0.0635 & 0.0594 \\
 & Bridge-Heuristic       & 0.1084 & 0.1042 & 0.1035 & & 0.0655 & 0.0597 & 0.0591 \\
 & CLEAR-PGD ($q$-only)   & 0.1084 & 0.1030 & 0.1016 & & 0.0655 & 0.0563 & 0.0554 \\
 & CLEAR-PGD ($p$-only)   & 0.1084 & 0.0951 & 0.0863 & & 0.0655 & 0.0544 & 0.0503 \\
 & CLEAR-PGD              & 0.1084 & \textbf{0.0894} & \textbf{0.0787} & & 0.0655 & \textbf{0.0462} & \textbf{0.0393} \\
\midrule
\multirow{6}{*}{\textbf{Pokec}} 
 & Random                 & 0.0983 & 0.0941 & 0.0822 & & 0.0637 & 0.0605 & 0.0513 \\
 & Ideological-Heuristic  & 0.0983 & 0.0975 & 0.0983 & & 0.0637 & 0.0612 & 0.0587 \\
 & Bridge-Heuristic       & 0.0983 & 0.0961 & 0.0954 & & 0.0637 & 0.0617 & 0.0591 \\
 & CLEAR-PGD ($q$-only)   & 0.0983 & 0.0952 & 0.0943 & & 0.0637 & 0.0594 & 0.0581 \\
 & CLEAR-PGD ($p$-only)   & 0.0983 & 0.0875 & 0.0804 & & 0.0637 & 0.0553 & 0.0502 \\
 & CLEAR-PGD              & 0.0983 & \textbf{0.0820} & \textbf{0.0734} & & 0.0637 & \textbf{0.0514} & \textbf{0.0432} \\
\bottomrule
\end{tabular}%
}
\vspace{-6pt}
\end{table}

\begin{table}[h!]
\centering
\caption{Pearson correlations ($r$) of shifts $(\Delta p_i, \Delta q_i)$ with $e_i =|s_i - 0.5|$, $\tau_i = |\Delta p_i| + |\Delta q_i|$, PageRank $\pi_i$, activity level of users, and budget split allocated between $\mathbf{p}$ and $\mathbf{q}$, for $b/n = 0.4$.}
\label{tab:appendix_rq3}
\vspace{-6pt}
\resizebox{\columnwidth}{!}{%
\begin{tabular}{llccccccccccc}
\toprule
\textbf{Dataset} & \textbf{Scen.} & \multicolumn{3}{c}{\textbf{Pearson Corr. ($r$)}} & \multicolumn{2}{c}{\textbf{Broadcasters}} & \multicolumn{2}{c}{\textbf{Contributors}} & \multicolumn{2}{c}{\textbf{Observers}} & \multicolumn{2}{c}{\textbf{Budget Split}} \\
\cmidrule(lr){3-5} \cmidrule(lr){6-7} \cmidrule(lr){8-9} \cmidrule(lr){10-11} \cmidrule(lr){12-13}
 & & $r(\Delta p, e)$ & $r(\Delta q, e)$ & $r(\tau, \pi)$ & $\overline{\Delta p}$ & $\overline{\Delta q}$ & $\overline{\Delta p}$ & $\overline{\Delta q}$ & $\overline{\Delta p}$ & $\overline{\Delta q}$ & $\% b_{\mathbf{p}}$ & $\% b_{\mathbf{q}}$ \\
\midrule
\multirow{2}{*}{\textbf{Wiki}} 
 & A & $-0.59$ & $+0.51$ & $+0.29$ & $-0.02$ & $+0.26$ & $+0.23$ & $+0.29$ & $+0.20$ & $0.00$  & $64.6\%$ & $35.4\%$ \\
 & B & $-0.35$ & $+0.72$ & $+0.21$ & $+0.03$ & $+0.27$ & $+0.26$ & $+0.16$ & $+0.24$ & $-0.03$ & $63.9\%$ & $36.1\%$ \\
\midrule
\multirow{2}{*}{\textbf{Gnutella}} 
 & A & $-0.32$ & $+0.35$ & $+0.32$ & $+0.08$ & $+0.26$ & $+0.26$ & $+0.30$ & $+0.24$ & $+0.04$ & $69.9\%$ & $30.1\%$ \\
 & B & $-0.35$ & $+0.68$ & $+0.21$ & $-0.02$ & $+0.18$ & $+0.25$ & $+0.17$ & $+0.22$ & $-0.02$ & $63.6\%$ & $36.4\%$ \\
\midrule
\multirow{2}{*}{\textbf{Epinions}} 
 & A & $-0.27$ & $+0.49$ & $+0.15$ & $+0.08$ & $+0.16$ & $+0.33$ & $+0.07$ & $+0.23$ & $+0.01$ & $65.1\%$ & $34.9\%$ \\
 & B & $-0.33$ & $+0.69$ & $+0.15$ & $+0.06$ & $+0.28$ & $+0.31$ & $+0.19$ & $+0.21$ & $-0.03$ & $61.4\%$ & $38.6\%$ \\
\midrule
\multirow{2}{*}{\textbf{Slashdot}} 
 & A & $-0.27$ & $+0.47$ & $+0.23$ & $+0.07$ & $+0.29$ & $+0.37$ & $+0.16$ & $+0.22$ & $+0.02$ & $65.7\%$ & $34.3\%$ \\
 & B & $-0.30$ & $+0.67$ & $+0.20$ & $+0.06$ & $+0.31$ & $+0.34$ & $+0.19$ & $+0.20$ & $-0.03$ & $61.2\%$ & $38.8\%$ \\
\midrule
\multirow{2}{*}{\textbf{YouTube}} 
 & A & $-0.24$ & $+0.42$ & $+0.12$ & $+0.05$ & $+0.25$ & $+0.33$ & $+0.28$ & $+0.21$ & $+0.01$ & $64.5\%$ & $35.5\%$ \\
 & B & $-0.22$ & $+0.60$ & $+0.08$ & $+0.07$ & $+0.50$ & $+0.37$ & $+0.25$ & $+0.19$ & $-0.02$ & $60.8\%$ & $39.2\%$ \\
\midrule
\multirow{2}{*}{\textbf{Pokec}} 
 & A & $-0.30$ & $+0.45$ & $+0.48$ & $+0.03$ & $+0.26$ & $+0.35$ & $+0.30$ & $+0.23$ & $+0.02$ & $72.0\%$ & $28.0\%$ \\
 & B & $-0.35$ & $+0.65$ & $+0.52$ & $+0.05$ & $+0.38$ & $+0.38$ & $+0.25$ & $+0.25$ & $+0.01$ & $68.6\%$ & $31.4\%$ \\
\bottomrule
\end{tabular}%
}
\vspace{-6pt}
\end{table}

\spara{Initial Setup:}
The results presented in RQ2 and RQ3 depend on the initial $\mathbf{p}^{(0)}$, $\mathbf{q}^{(0)}$, $\mathbf{s}$, and $\mathbf{a}$. 
We chose to apply the \textsc{CLEAR-PGD} and the baseline algorithms under this regime to evaluate the model on realistic network behaviors~\cite{nielsen2006participation,CarronArthur2014}. 
Our evaluation focuses on these  configurations and the underlying mathematical framework remains applicable to any other domain-specific initializations.

\end{document}